\documentclass[a4paper,UKenglish]{lipics-v2019}
\usepackage{amsfonts}
\usepackage{ifthen}
\usepackage{tikz}
\usetikzlibrary{calc}
\usetikzlibrary{math}
\usetikzlibrary{decorations}
\usepgflibrary{decorations.pathmorphing}
\usepgflibrary{arrows.meta}
\usetikzlibrary{circuits.ee.IEC}
\usepgflibrary{shapes.misc}
\usetikzlibrary{positioning}
\usepackage[linesnumbered,vlined]{algorithm2e}


\newcommand{\R}{\mathbb{R}}
\newcommand{\Rp}{\R_{\geqslant 0}}

\newcommand{\N}{\mathbb{N}}
\newcommand{\Cx}{\mathbb{C}}
\newcommand{\Q}{\mathbb{Q}}
\newcommand{\Qbar}{{\mathbb{A}}}

\newcommand{\Z}{\mathbb{Z}}
\newcommand{\K}{\mathbb{K}}

\newcommand{\set}[1]{\left\{#1\right\}}

\DeclareMathOperator{\GL}{GL}

\DeclareMathOperator{\diag}{diag}

\newcommand{\Zcl}[2][]{\overline{#2}^{#1}}

\newcommand{\Gen}[1]{\left\langle#1\right\rangle}

\allowdisplaybreaks

\newcommand{\ie}{i.e.,}

\SetAlgoNoEnd
\DontPrintSemicolon
\SetKw{KwFrom}{from}
\SetKw{KwAnd}{and}
\SetKwInOut{Input}{input}
\SetKwInOut{Output}{output}


\title{Algebraic Invariants for Linear Hybrid Automata} 


\author{Rupak Majumdar}{Max Planck Institute for Software Systems, Germany}{%
rupak@mpi-sws.org}{}{}

\author{Jo\"el Ouaknine}{Max Planck Institute for Software Systems, Germany\and%
Department of Computer Science, Oxford University, UK}{%
joel@mpi-sws.org}{}{ERC grant AVS-ISS (648701) and Deutsche
Forschungsgemeinschaft (DFG, German Research Foundation) ---
Projektnummer 389792660 --- TRR 248}

\author{Amaury Pouly}{CNRS, IRIF, Université Paris Diderot, France}{%
amaury.pouly@irif.fr}{https://orcid.org/0000-0002-2549-951X}{Part of this work was done at MPI-SWS}

\author{James Worrell}{Department of Computer Science, Oxford University, UK}{%
jbw@mpi-sws.org}{}{EPSRC Fellowship EP/N008197/1}

\authorrunning{R. Majumdar, J. Ouaknine, A. Pouly and J. Worrell}

\Copyright{Joël Ouaknine, Amaury Pouly and James Worrell}

\ccsdesc[100]{Theory of computation~Models of computation}
\ccsdesc[100]{timed and
  hybrid models}

\keywords{Hybrid automata, algebraic invariants}

\category{}

\relatedversion{}

\supplement{}


\acknowledgements{}

\nolinenumbers 


\EventEditors{John Q. Open and Joan R. Access}
\EventNoEds{2}
\EventLongTitle{42nd Conference on Very Important Topics (CVIT 2016)}
\EventShortTitle{CVIT 2016}
\EventAcronym{CVIT}
\EventYear{2016}
\EventDate{December 24--27, 2016}
\EventLocation{Little Whinging, United Kingdom}
\EventLogo{}
\SeriesVolume{42}
\ArticleNo{23}

\begin{document}

\maketitle

\begin{abstract}
  We exhibit an algorithm to compute the strongest algebraic (or
  polynomial) invariants that hold at each location of a given
  unguarded linear hybrid automaton (i.e., a hybrid automaton
  having only unguarded transitions, all
  of whose assignments are given by affine expressions, and all of
  whose continuous dynamics are given by linear differential
  equations).  Our main tool is a control-theoretic result of
  independent interest: given such a linear hybrid automaton, we show
  how to discretise the continuous dynamics in such a way that the
  resulting automaton has precisely the same algebraic invariants.
\end{abstract}

\section{Introduction}

Invariants are one of the most fundamental and useful
notions in the quantitative sciences, appearing in a wide range of
contexts, from gauge theory, dynamical systems, and control theory
in physics, mathematics, and engineering to program verification,
static analysis, abstract interpretation, and programming language
semantics (among others) in computer science. In spite of decades of
scientific work and progress, automated invariant synthesis remains
a topic of active research, particularly in the fields of computer-aided
verification and program analysis, and plays a central role in methods
and tools seeking to establish correctness properties of computer
systems; see, e.g., \cite{KCBR18}, and particularly Sec.~8 therein.

In this paper, we consider the task of computing \emph{strongest
  algebraic inductive invariants} for \emph{unguarded linear
  hybrid automata}. Hybrid automata are a formalism for describing
systems or processes that combine discrete and continuous evolutions
over their state variables. A hybrid automaton is therefore equipped
with a finite set of real-valued variables, as well as a finite set of
control location (or modes). In each location, the variables evolve according to
some differential dynamics. Transitions between control locations may
effect discrete updates (also known as \emph{resets}) to these
variables. A hybrid automaton is \emph{unguarded} if the
transitions do not have any guards, or preconditions, in order to be
fired, and it is \emph{linear} if the discrete updates on the
variables consist entirely of affine transformations, and the
continuous dynamics within each control location are defined by
linear differential equations.

An \emph{invariant} assigns to each control location a fixed set of real
values in such a way that through any trajectory of the hybrid
automaton, the values of the variables always remain within the
invariant. The invariant is \emph{inductive} provided, informally
speaking, that it is itself preserved by the (continuous and discrete)
dynamics of the hybrid automaton. Finally, an invariant is
\emph{algebraic} (or polynomial) if it consists in a collection of varieties (or
algebraic sets), i.e., positive Boolean combination of polynomial
equalities. As it happens, the strongest polynomial invariant (i.e.,
smallest variety with respect to set inclusion) is obtained by taking
the Zariski closure of the set of reachable configurations in each
control location; such an invariant is always inductive provided that
the dynamics are Zariski continuous.

There is a rich history of research into the computation of algebraic
invariants for various classes of (discrete) computer programs; we
refer the reader to our recent paper~\cite{HOP018} and references
therein. There has also been a substantial amount of work on algebraic
invariant generation for hybrid systems, albeit in more recent years.
One of the earliest pieces of work on this topic is by
Rodr\'igez-Carbonell and Tiwari~\cite{CE05}, who consider linear
dynamical systems (i.e., linear hybrid automata with a single discrete
location and no transition) and show how to compute strongest
algebraic invariants for these. They then leverage
abstract-interpretation techniques to derive algebraic invariants for
linear hybrid automata, however without guarantees on the strength of
the invariants. In \cite{SSM08}, Sankaranarayanan et al.\ compute
algebraic invariants for polynomial hybrid systems directly using
constraint solving over template invariants (without however
guaranteeing to obtain the strongest invariant). In subsequent work,
Sankaranarayanan shows how to compute strongest algebraic invariants
up to a fixed degree~\cite{San10} for the same class of
automata. Using different analytic techniques, Ghorbal and Platzer
show in~\cite{GP14} how to compute algebraic invariants and
differential invariants for polynomial hybrid automata; more
precisely, they show that deciding whether a collection of algebraic
sets forms an algebraic invariant is decidable; they do not however
provide a procedure to guarantee that a given invariant is the
strongest possible.

\textbf{Main results.} 
Our contributions in the present paper are threefold. \textbf{First,} for the
class of unguarded linear hybrid automata, building on our
recently developed invariant-generation techniques for affine
programs~\cite{HOP018}, we show how to compute strongest algebraic
invariants. Our main technical tools come from linear algebra, algebraic
geometry, and Diophantine geometry.
\textbf{Second,} we show how one can discretise an unguarded
linear hybrid automaton in such a way that the discretised version has
precisely the same algebraic invariants as the original one; we are
not aware of any such result in the extant control-theoretic and
cyber-physical systems literature. \textbf{And
third,} we show that as soon as equality guards are allowed, even for
the restricted class of linear \emph{switching systems}, there cannot
exist an algorithm for computing strongest algebraic invariants,
thereby establishing clearly a hard theoretical limit on how far the
work presented here can be extended.

We now provide a sightly more detailed overview of our approach and results.
In Section~\ref{sec:closure_lin_diff_eq} we consider a simple class
of hybrid systems, with
purely continuous dynamics, called switching systems.  A switching
system can transition arbitrarily between modes, but the variables are
not reset when changing mode.  It is natural to think of mode switches
as being determined by an external controller which provides inputs to
the system.  We show that for a switching system, the Zariski closure
of the set of reachable configurations is an irreducible variety.  We
exploit this feature to give a conceptually simple algorithm to
compute the strongest algebraic invariant of a given switching system.

In the rest of the paper we develop this basic result in two
directions.  In Sections~\ref{sec:const-discretisation}
and~\ref{sec:fin-discretisation} we generalise the above analysis to
accommodate discrete transitions that may reset variables (e.g., the
bouncing ball in Example~\ref{sec:bouncing_ball}, in which the
velocity is reset on hitting a boundary, or the RC circuit in
Example~\ref{sec:RC_circuit}, in which the electrical currents and
voltages change abruptly when a switch is turned on or off).  In this
setting the dynamics is not necessarily continuous.  In fact, our
approach here is to eliminate the continuous dynamics altogether by
introducing a discretisation construction on hybrid automata that
preserves all algebraic invariants.  More formally, given a hybrid
automaton we show how to construct a finite (discrete-time) affine
program that has the same same set of variables and the same algebraic
invariants as the original hybrid automaton.  We can then rely on a
result in~\cite{HOP018} to compute strongest algebraic invariants.

In another direction, in Section~\ref{sec:guards} we consider the
problem of computing the strongest algebraic invariant for switching
systems that are augmented with the ability to test variables for zero
on transitions.  Here again the dynamics are exclusively continuous.
We show that it is undecidable in general to compute a strongest
algebraic invariant for such systems.  Roughly speaking, we prove this
result by defining a simulation of an arbitrary Minsky machine
$\mathcal{C}$ by a hybrid automaton $\mathcal{A}$, such that we can
effectively determine whether the set of reachable configurations of
$\mathcal{C}$ is infinite from the strongest algebraic invariant of
$\mathcal{A}$.

\section{Examples}

\subsection{Bouncing ball}\label{sec:bouncing_ball}

Consider a ball that bounces on horizontal slabs, as illustrated in Figure~\ref{fig:bouncing_ball}.
The ball is moving at constant horizontal speed $c$ and is subject to gravity along the vertical
axis. We assume that there is no friction and that collisions are perfectly elastic. We do
not want to make any assumption on the location of the slabs, to obtain the most general system.
Thus from the point of the view of system, the positions of the slabs are `nondeterministic'
and the slabs can `appear' at any moment.

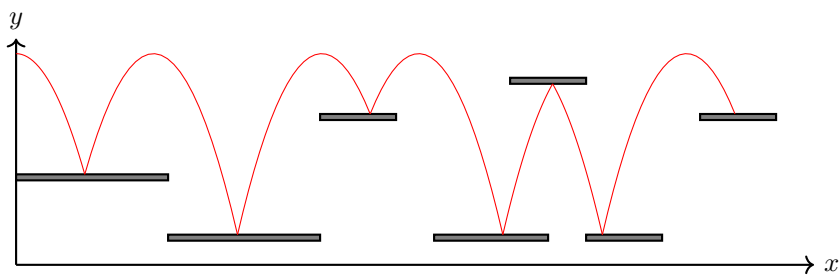
\begin{figure}[h]
    \begin{center}
    \begin{tikzpicture}[
            my slab/.style = {thick,fill=gray},
            my axis/.style = {thick, ->},
            my ball curve/.style = {red},
            yscale=0.4
        ]
        \tikzmath{
            \g = 9.8;
            \c = 1;
            \t = 0;
            \x = 0;
            \h = 7;
            \y = \h;
            \vy = 0;
        }
        \draw[my axis] (0,0) -- (10.5,0) node[right] {$x$};
        \draw[my axis] (0,0) -- (0, \h+0.5) node[above] {$y$};
        \message{DEBUG START t = \t, x = \x, y = \y, vy = \vy}
        \foreach \slabx/\slaby/\slabdx/\slabdir [remember=\t,remember=\x,remember=\y,remember=\vy]
                in {0/3/2/-1, 2/1/2/-1, 4/5/1/-1, 5.5/1/1.5/-1, 6.5/6/1/1, 7.5/1/1/-1, 9/5/1/-1} {
            \tikzmath{
                \newt = (\g*\t+\vy-\slabdir*sqrt(\vy*\vy-2*\g*\slaby+2*\g*\y))/\g;
            }
            \draw[my slab] (\slabx,\slaby) rectangle ++(\slabdx,0.2*\slabdir);
            \draw[my ball curve] plot[domain=\t:\newt,variable=\u]
                ({\x+\c*(\u-\t},{\y+(\u-\t)*\vy-1/2*(\u-\t)*(\u-\t)*\g});
            \tikzmath{
                \y = \y+(\newt-\t)*\vy-1/2*(\newt-\t)*(\newt-\t)*\g;
                \vy = \vy-(\newt-\t)*\g;
                \x = \x + \c * (\newt - \t);
                \t = \newt;
                \vy = -\vy;
            }
            \message{DEBUG t = \t, x = \x, y = \y, vy = \vy}
        }
    \end{tikzpicture}
    \end{center}
    \caption{A ball bouncing on horizontal slabs.\label{fig:bouncing_ball}}
\end{figure}

We fix an arbitrary coordinate system in which the ball starts at position $(0,h)$ with initial velocities
$(c,0)$. The system is modelled using a single discrete location. There are three constants $c$
(the horizontal speed), $h$ (the initial height), and $g$ (approximately $9.8 ms^{-2}$).
Technically speaking we could also include $m$ (the mass of the ball),
but it will not feature in the final set of invariants that we derive nor in our differential equations.
There are five variables: $t$, $x$, $y$, $v_x$, and $v_y$, where $t$ is the time
variable. The differential equations are simply Newton's laws of motion. There is a single reset
with no guards modelling the action of the ball bouncing on each of
the slabs (notably ensuring that the vertical velocity is instantly inverted).
We obtain the hybrid system described in Figure~\ref{fig:hybrid:bouncing_ball}.

Now we come to the invariants. The most obvious is that the horizontal speed is constant:
$v_x = c$, which in turn entails that $x = tc$.
Next, consider an inertial coordinate system moving horizontally to the right at speed $c$.
In that system energy must be conserved. Initially there is no kinetic energy,
and all the potential energy amounts to $mgh$ (where $m$ is the mass of the ball).
At any subsequent time $t$, the sum of the kinetic and potential energy must therefore sum to
that value, \ie{} $\tfrac{1}{2}mv_y^2 + mgy = mgh$, so $v_y^2 + 2g(y-h) = 0$.
These must be the only invariants: the ambient space is 5-dimensional (given that our variables are
$t$, $x$, $y$, $v_x$, and $v_y$), and the corresponding variety (with 3 equations) is two dimensional.
But the system has indeed exactly two degrees of freedom, since t and $v_y$ can be set to arbitrary values
(provided $|v_y| \leq gt$), and once t and $v_y$ are fixed, every other variable is fixed.
In summary, the strongest algebraic invariant is the conjunction of
the following three equations:
\[
    v_x=c,
    \qquad
    x=tc,
    \qquad
    v_y^2 + 2g(y-h) = 0.
\]

\begin{figure}
    \begin{subfigure}[t]{0.35\textwidth}
        \centering
        \begin{tikzpicture}
            \node[draw,rectangle,very thick,inner sep=3pt] (state) {
                $\begin{array}{c@{\hspace{0.7ex}}l}
                \dot{x}     &= v_x \\
                \dot{y}     &= v_y \\
                \dot{v}_x   &= 0\\
                \dot{v}_y   &= -g \\
                \dot{t}     &= 1
                \end{array}$};
            \draw[->,thick] ($(state.west)+(-2cm,0)$) -- (state)
                node[midway,above] {\begin{tabular}{c}$t:=0$\\$x:=0$\\$y:=h$\end{tabular}}
                node[midway,below]
                {\begin{tabular}{c}$v_x:=c$\\$v_y:=0$\end{tabular}};
            \draw (state) edge[thick,loop above,distance=1cm,->] node[above,midway] {$v_y:=-v_y$} (state);
            \node (invisible) at (0,-1.65) {};
        \end{tikzpicture}
        \subcaption{Hybrid system modelling a bouncing ball.}\label{fig:hybrid:bouncing_ball}
    \end{subfigure}\hfill
    \begin{subfigure}[t]{0.63\textwidth}
        \centering
        \begin{tikzpicture}
            \node[draw,rectangle,very thick,inner sep=3pt] (open) {
                $\begin{array}{c@{\hspace{0.7ex}}l}
                \dot{I}     &= 0 \\
                \dot{I}_R   &= -\frac{1}{RC}I_R\\
                \dot{V}_R   &= -\tfrac{1}{C}I_R \\
                \dot{Q}     &= I_R\\
                \dot{V}_C   &= \frac{1}{C}I_R
                \end{array}$};
            \node[above=1ex of open] {\textsc{open}};
            \node[draw,rectangle,very thick,inner sep=3pt] at (6,0) (closed) {
                $\begin{array}{c@{\hspace{0.7ex}}l}
                \dot{I}     &= -\frac{1}{RC}I_R \\
                \dot{I}_R   &= -\frac{1}{RC}I_R\\
                \dot{V}_R   &= -\tfrac{1}{C}I_R \\
                \dot{Q}     &= I_R\\
                \dot{V}_C   &= \frac{1}{C}I_R
                \end{array}$};
            \node[above=1ex of closed] {\textsc{closed}};
            \draw[thick,->] ($(open.north east)!0.9!(open.east)$) -- ($(closed.north west)!0.9!(closed.west)$)
                node[above,midway] {
                    $\begin{array}{c@{\hspace{0.3ex}}l}
                    \scriptstyle I   &:= \scriptstyle\tfrac{1}{R}(V-V_C)\\
                    \scriptstyle I_R &:= \scriptstyle\tfrac{1}{R}(V-V_C)\\
                    \scriptstyle V_R &:= \scriptstyle V-V_C
                    \end{array}$};
            \draw[thick,<-] ($(open.south east)!0.9!(open.east)$) -- ($(closed.south west)!0.9!(closed.west)$)
                node[below,midway] {
                    $\begin{array}{c@{\hspace{0.3ex}}l}
                    \scriptstyle I   &:= \scriptstyle 0 \\
                    \scriptstyle I_R &:= \scriptstyle-\tfrac{1}{R}V_C\\
                    \scriptstyle V_R &:= \scriptstyle -V_C
                    \end{array}$};
        \end{tikzpicture}
        \subcaption{Hybrid system modelling an RC circuit.}\label{fig:hybrid:rc_circuit}
    \end{subfigure}
    \caption{Examples of hybrid systems.}\label{fig:hybrid}
\end{figure}
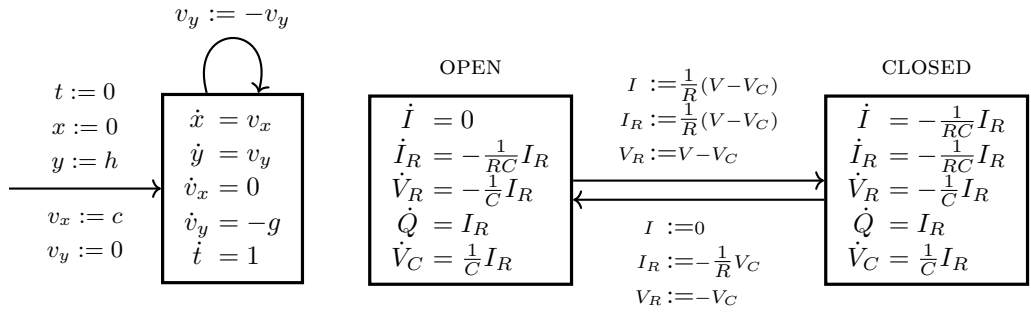

\subsection{RC circuit}\label{sec:RC_circuit}

\begin{figure}[h]
    \begin{center}
    \begin{tikzpicture}[
                circuit ee IEC,small circuit symbols,
                set resistor graphic=var resistor IEC graphic,
                set make contact graphic= var make contact IEC graphic
            ]
            \coordinate (top left);
            \coordinate (top right) at (5.5,0) {};
            \coordinate (bot left) at (0, -2) {};
            \coordinate (bot right) at (5.5, -2) {};
            \def\switchpos{0.3}
            \node[contact] (bot to switch) at ($(bot left)!\switchpos!(bot right)$) {};
            \node[contact={info=right:{\tiny OPEN}}] (top to switch)
                at ($(top left)!\switchpos!(top right)+(0,-2ex)$) {};
            \draw (top left) to[current direction={pos=0.1,info=above:$I$},
                                contact={pos=\switchpos-0.05,info=above:{\tiny CLOSED}},
                                make contact={pos=\switchpos,rotate=180},
                                current direction={pos=0.55, info=above:$I_R$},
                                resistor={pos=0.8, info=below:$R$, info=above:$V_R$}] (top right);
            \draw (top right) to[capacitor={midway,info=above:$V_R$,info=below left:$Q$,
                                info=below right:$C$}] (bot right);
            \draw (bot left) -- (bot to switch);
            \draw (bot to switch) -- (bot right);
            \draw (top left) to[battery={pos=0.4,info=below:$V$},battery={pos=0.6}] (bot left);
            \draw (bot to switch) -- (top to switch);
        \end{tikzpicture}
    \end{center}
    \caption{An RC circuit with a switch to disconnect the battery.\label{fig:rc_circuit}}
\end{figure}
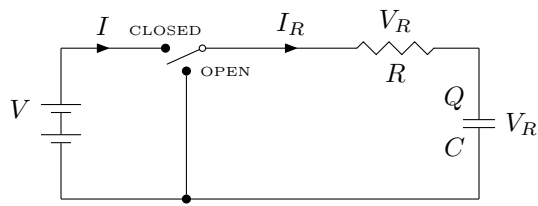

Consider an RC circuit with a switch, as illustrated in
Figure~\ref{fig:rc_circuit}. When the switch is on, the capacitor is
connected to a battery and charges. When the switch is off, the
capacitor discharges through the resistor. The battery has constant
voltage $V$, the resistor has resistance $R$ and the capacitor has
capacity $C$. There are $5$ variables: the current $I$ in the wire
between the battery and the switch, the voltages $V_R$ and $V_C$
across the resistor and capacitor respectively, the current $I_R$
flowing through the resistor, and finally the charge $Q$ held by the
capacitor.  All derivatives are with respect to time, though this time
we choose not to include an explicit variable for the passage of
time. There are two discrete locations, \textsc{open} and \textsc{closed}, 
corresponding to the two possible positions of the switch. We assume the
switch starts in the \textsc{open} position. When switching from \textsc{open} to
\textsc{closed}, all variables but $Q$ and $V_C$ experience a reset.  We obtain
the hybrid system described in Figure~\ref{fig:hybrid:rc_circuit}.

In this example, there is one set of invariants per location. In both locations, we clearly have
the invariants associated with the passive components: $Q=CV_C$ and $V_R=RI_R$.
In the \textsc{open} location, we further have the invariants $I=0$ and $V_R=-V_C$. On the hand,
in the \textsc{closed} location, we have $I=I_R$ and $V=V_R+V_C$.
These must be the only invariants: once $Q$ is fixed, all variables are uniquely determined.
In summary, the invariants are
\begin{center}
    \begin{tabular}{c@{\hspace{1cm}}llll}
        \textsc{open}: & $Q = CV_C,$ & $V_R = RI_R,$ & $I = 0,$ & $V_R = -V_C,$\\
        \textsc{closed}: & $Q = CV_C,$ & $V_R = RI_R,$ & $I = I_R,$ & $V_R = V-V_C.$
    \end{tabular}
\end{center}

\section{Mathematical Background}\label{sec:math}

Let $\K$ be a field.  Given a set $X\subseteq\K^n$, we denote by
$\mathbf{I}(X)$ the ideal of polynomials in $\K[x_1,\ldots,x_n]$ that
vanish on $X$. Given an ideal $I\subseteq\K[x_1,\ldots,x_n]$, we
denote by $V(I) \subseteq \K^n$ the set of common zeroes of the
polynomials in $I$.  A set $X \subseteq \K^n$ is said to be an
\emph{affine variety} (also called an \emph{algebraic set}) if
$X=V(I)$ for some ideal $I\subseteq\K[x_1,\ldots,x_n]$.  By the
Hilbert Basis Theorem, every affine variety can be described as the
set of common zeroes of finitely many polynomials.  We identify
$\GL_n(\K)$, the set of $n\times n$ invertible matrices with entries
in $\K$, with the variety
$\{(A,y) \in \K^{n^2+1} : \det(A) \cdot y = 1\}$.

Given an affine variety $X\subseteq\K^n$, the \emph{Zariski topology}
on $X$ has as closed sets the subvarieties of $X$, i.e., those sets
$A\subseteq X$ that are themselves affine varieties in $\K^n$.  Given
an arbitrary set $S\subseteq X$, we write $\Zcl{S}$ for its closure in
the Zariski topology on $X$.

A set $S \subseteq X$ is \emph{irreducible} if
for all closed subsets $A_1,A_2\subseteq X$ such that
$S\subseteq A_1 \cup A_2$ we have either $S\subseteq A_1$ or
$S\subseteq A_2$.  It is well known that the Zariski topology on a
variety is Noetherian.  In particular, any closed subset $A$ of $X$
can be written as a finite union of \emph{irreducible components},
where an irreducible component of $A$ is a maximal irreducible closed
subset of $A$.

The class of \emph{constructible} subsets of a variety $X$ is obtained
by taking all Boolean combinations (including complementation) of
Zariski closed subsets.  Suppose that the underlying field $\K$ is
algebraically closed.  Since the first-order theory of algebraically
closed fields admits quantifier elimination, the constructible subsets
of $X$ are exactly the subsets of $X$ that are first-order definable
over $\K$.

A function $f:\K^m\rightarrow \K^n$ is said to be a \emph{polynomial
  map} if there exist polynomials
$p_1,\ldots,p_n \in \K[x_1,\ldots,x_m]$ such that
$f(\boldsymbol{a}) = (p_1(\boldsymbol{a}),\ldots,p_n(\boldsymbol{a}))$
for all $\boldsymbol{a}\in \K^m$.  
Recall that
polynomial maps are Zariski-continuous and thus
$f(\Zcl{X})\subseteq\Zcl{f(X)}$ for a polynomial map $f$.  In particular
matrix multiplication is a Zariski-continuous map 
$\K^{n^2} \times \K^{n^2} \rightarrow \K^{n^2}$.

Given a complex variety $V\subseteq\Cx^n$, the intersection
$V\cap\R^n$, which is a real variety, can be computed
effectively. Indeed if $V$ is represented by the ideal
$I\subseteq \Cx[x_1,\ldots,x_n]$, then $V\cap\R^n$ is represented by
the ideal generated by $\set{p_\R,p_{i\R}:p\in I}$ where $p_\R$ and
$p_{i\R}$ respectively denote the real and imaginary parts of the
polynomial $p$.  Given $S \subseteq \R^n$, write $\Zcl[\R]{S}$ for its
real Zariski closure and $\Zcl{S}$ for its complex Zariski closure
(i.e., we treat the complex Zariski closure of $S \subseteq \R^n$ as
the default).  It is straightforward to verify that
$\Zcl[\R]{S} = \Zcl{S} \cap \R^n$.

\section{Algebraic Invariants for Hybrid Automata}\label{sec:alg_invariants_hybrid}

We are concerned with computing strongest algebraic invariants for
the subclass of hybrid automata that has no guards, linear discrete
updates, and linear continuous dynamics.  Each location of such an
automaton specifies a linear differential equation ($x'=Ax$), and each
transition between states (nondeterministic choices are allowed)
specifies a linear transformation ($x\mapsto Bx$).  Such a hybrid
automaton can be pictured as follows:

\begin{center}
\begin{tikzpicture}[
        mynode/.style={fill=blue!30!white,circle,draw},
        mynode final/.style={mynode,double, double distance=0.5mm},
        myarrow/.style={->,thick},
    ]
    \node[mynode] at (0,0) (n1) {$x'=A_ix$};
    \node[mynode] at (4,0) (n2) {$x'=A_jx$};
    \path (n1) ++(20:-2) edge[myarrow] (n1);
    \path (n1) ++(0:-2) edge[myarrow] (n1);
    \path (n1) ++(-20:-2) edge[myarrow] (n1);
    \path (n2) edge[myarrow] ++(-20:2);
    \path (n2) edge[myarrow] ++(0:2);
    \path (n2) edge[myarrow] ++(20:2);
    \draw[myarrow] (n1) edge[bend left=15] node[midway,above] {$B^{ij}_1$} (n2);
    \path (n1) edge[opacity=0] node[opacity=1,midway] {$\ldots$} (n2);
    \draw[myarrow] (n1) edge[bend right=15] node[midway,below] {$B^{ij}_k$} (n2);
\end{tikzpicture}
\end{center}
Formally, such an automaton $\mathcal{A}$ in dimension $d$ is a tuple
$(Q,A,E,T)$, where $Q$ is a finite set of locations,
$A = \{A_q:q \in Q\}$ is a family of real $d\times d$ matrices,
$E\subseteq Q\times\R^{d\times d}\times Q$ is a set of transitions
labelled by real $d\times d$ matrices, and $\{ T_q : q \in Q\}$ is a
family of algebraic subsets of $\R^d$.  Matrix $A_q\in\R^{d\times d}$
describes the continuous dynamics at location $q\in Q$ and
$T_q \subseteq \R^d$ is the set of initial states in location $q$.  We
assume that the entries of all matrices are algebraic numbers and that
the polynomials defining $T_q$ have algebraic coefficients.

We will consider the subclasses of automata with the following
restrictions:
\begin{itemize}
\item \emph{affine programs}: $A_q=0$ for all $q\in Q$, and $E$ is finite,
\item \emph{constructible affine programs}: $A_q=0$ for all $q\in\Q$, and $E$ is constructible,
\item \emph{switching systems}: $E=\{(p,I_n,q):p,q\in Q\}$, i.e., every pair of locations is connected by an edge that does not update the variables,
\item \emph{linear hybrid automata}: $E$ is finite.
\end{itemize}
In affine programs, variables are only updated on discrete edges: there
is no continuous evolution within locations.  At the other end of the
spectrum, in switching systems variables only evolve continuously, and
there are no discrete updates.  The full class of linear hybrid
automata accommodate both discrete an continuous updates to the
variables.

The \emph{collecting semantics} of $\mathcal{A}$ assigns to each
location $q\in\ Q$ the set $S_q\subseteq\R^d$ of states that can occur
in location $q$ during a run of the automaton, starting from a
configuration $(q,\boldsymbol a)$ for some $\boldsymbol a \in
T_q$. Formally, this is the smallest family (with respect to set inclusion)
such that
\[\begin{array}{rcll}
S_{q}&\supseteq&T_q &\text{for all }q\in Q,\\
S_{q}&\supseteq& BS_{p}&\text{for all }(p,B,q)\in E,\\
S_{q}&\supseteq& e^{A_qt}S_q&\text{for all }t\in\R_{\geqslant0}.\\
  \end{array}\]
Equivalently, let the operator $\Phi_\mathcal{A}:\mathcal{P}(\R^d)^Q\to\mathcal{P}(\R^d)^Q$ be defined by
\[\Phi_\mathcal{A}(S)_q=T_q\cup \bigcup_{t\geqslant0}e^{A_qt}S_q\cup\bigcup_{(p,B,q)\in E}BS_p.\]
Then $S$ is the least fixed-point of $\Phi_\mathcal{A}$ with respect 
to set inclusion.
Such a least fixed-point exists because
$\Phi_\mathcal{A}$ is monotone.

  In general we say that a family of set $\{ S'_q \}_{q \in Q}$, with
  $S'_q \subseteq \R^d$ is an \emph{inductive invariant} if
  $\Phi_{\mathcal{A}}(S') \subseteq S'$, i.e., the family is
  pre-fixed-point of $\Phi_{\mathcal{A}}$.  If each set $S'_q$ is
  algebraic then we moreover say that $\{ S'_q \}_{q \in Q}$ is an
  \emph{inductive algebraic invariant}.

Given $P\in\R[x_1,\ldots,x_d]$, we say that the relation $P=0$ holds
at location $q$ if $P$ vanishes on $S_q$. We are interested in
computing at each location $q\in\Q$ a finite set of polynomials that
generates the ideal $I_q:=\mathbf{I}(S_q)\subseteq\R[x_1,\ldots,x_d]$
of all polynomial relations that hold at location $q$.  The real
variety $V_q:=V(I_q)=\Zcl[\R]{S_q}$ corresponding to $I_q$ is the
Zariski closure of $S_q$ viewed a subset of the affine space $\R^d$.

Note that the collection $V(\mathcal{A}):=\set{V_q:q\in Q}$ defines an
inductive algebraic invariant.  Inductiveness amounts to the following
two claims:
\begin{itemize}
\item for every edge $(p,B,q)\in\ E$, we have $BV_p\subseteq V_q$,
\item for every $q\in Q$ and $t\in\Rp$, we have $e^{A_qt}V_q\subseteq V_q$.
\end{itemize}
The first point follows from the fact that $x\mapsto Bx$ is Zariski-continuous; the second point
likewise follows from the fact that for every $t\in\Rp$ the map  $x\mapsto e^{A_qt}x$
is Zariski-continuous. 

The discussion above shows that $V(\mathcal{A})$, the Zariski closure
of the collecting semantics, is the least inductive invariant of
$\mathcal{A}$.  Previously we have shown how to compute the Zariski
closure of the collecting semantics of an affine program:

\begin{theorem}[\cite{HOP018}]\label{thm:lics-main}
  There is an algorithm that given a
  constructible affine program $\mathcal{A}$ computes
  $V(\mathcal{A})=\set{V_q:q\in\Q}$---the real Zariski closure of its
  collecting semantics.
\end{theorem}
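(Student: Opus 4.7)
With the continuous dynamics absent ($A_q=0$), the fixed-point equation for the collecting semantics collapses to $S_q = T_q \cup \bigcup_{(p,B,q)\in E} B S_p$. My plan is to compute $V(\mathcal{A})$ by performing a Kleene iteration directly on Zariski-closed sets and appealing to Noetherianity for termination. I would define a monotone operator $\Psi$ on tuples $W=(W_q)_{q\in Q}$ of real varieties in $\R^d$ by
\[
\Psi(W)_q \;=\; \Zcl[\R]{\, T_q \;\cup\; \bigcup_{(p,B,q)\in E} B\cdot W_p \,},
\]
which is well-defined and monotone since matrix multiplication is Zariski-continuous and closures respect inclusions. Starting from $W^{(0)}_q=\emptyset$, the iterates $W^{(n)}:=\Psi^n(W^{(0)})$ form an ascending chain in the Noetherian lattice of subvarieties of $\R^d$ coordinatewise, and because $Q$ is finite the chain must stabilize at some stage $N$. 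By the discussion preceding the theorem, the stable tuple is the least inductive algebraic invariant of $\mathcal{A}$, that is, $V(\mathcal{A})$.

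The algorithmic substance then reduces to computing a single application of $\Psi$ effectively and detecting stabilization. For each ordered pair $(p,q)$, the slice $E_{p,q}:=\{B\in\R^{d\times d}:(p,B,q)\in E\}$ is constructible. The image $E_{p,q}\cdot W_p$ can be viewed as the projection onto the $v$-coordinate of the constructible set $\{(B,w,v):B\in E_{p,q},\,w\in W_p,\,v=Bw\}$; by quantifier elimination for algebraically closed fields---applied over $\Cx$ and then intersected with $\R^d$ as outlined at the end of Section~\ref{sec:math}---this projection is itself constructible and its Zariski closure can be extracted via Gr\"obner-basis computations over the algebraic numbers. Stabilization would then be detected by testing ideal equality of successive iterates, which is also decidable in this setting.

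The hard part will be controlling the projection step. When $E$ is finite, a single update is just a finite union of linear images $BW_p$ of known varieties, and its closure is immediate. For constructible $E$, however, one must eliminate the matrix parameters $B$ from Boolean combinations of polynomial relations defining $E$; making this step effective---and keeping the representation complexity in check across the iteration so that the procedure terminates in practice and not merely in principle---is, I expect, the principal technical substance of the result imported from~\cite{HOP018}.
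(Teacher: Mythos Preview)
The paper does not prove this theorem; it is cited from~\cite{HOP018} and used as a black box. Nevertheless, your proposed argument contains a genuine gap.

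You assert that the ascending chain $W^{(0)}\subseteq W^{(1)}\subseteq\cdots$ of varieties must stabilise ``by Noetherianity''. This is the wrong direction: Noetherianity of the Zariski topology is the \emph{descending} chain condition on closed sets (equivalently, the ascending chain condition on ideals). There is no ACC on Zariski-closed sets, and Kleene iteration from below on varieties does not terminate in general. A one-line counterexample: take $d=1$, a single location $q$, $T_q=\{1\}$, and a single self-loop labelled by the $1\times 1$ matrix $B=(2)$. The collecting semantics is $S_q=\{2^n:n\geq 0\}$ with $\Zcl[\R]{S_q}=\R$, but your iteration yields $W^{(n)}_q=\{1,2,4,\ldots,2^{n-1}\}$, a strictly increasing chain of finite (hence closed) sets that never reaches~$\R$.

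This non-termination is exactly why the result of~\cite{HOP018} is substantial and is imported rather than reproved. The algorithm there does not perform naive forward iteration on the reachable set. Instead it reduces the problem to computing the Zariski closure of a finitely (or constructibly) generated matrix \emph{semigroup}, and exploits structural facts about linear algebraic groups: the closure of a sub-semigroup of $\GL_n$ is an algebraic group, and one can organise the computation so that the chain of approximants consists of \emph{irreducible} varieties (or, more generally, varieties whose number and dimensions of irreducible components are controlled), whereupon an ascending chain is bounded in length by the ambient dimension. The glimpse of this mechanism that the present paper does give is Proposition~\ref{prop:switched}, where termination of the loop is argued precisely via irreducibility and a dimension bound, not via a generic Noetherian/ACC claim. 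Your outline of the effective projection step for constructible $E$ is fine, but without such a structural replacement for the failed ACC argument the procedure need not halt.
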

Note that this theorem also works for (finite) affine programs since
those are a particular case of constructible affine programs. 

The main result of the current paper extends the above result by
accommodating the continuous dynamics of hybrid automata:

\begin{theorem}\label{thm:main}
There is an algorithm that given a linear hybrid automaton $\mathcal{A}$
computes $\set{V_q:q\in Q}$---the real Zariski closure of
its collecting semantics.
\end{theorem}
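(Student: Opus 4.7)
The plan is to discretise $\mathcal{A}$ into a constructible affine program $\mathcal{A}_c$ with the same algebraic invariants, and then invoke Theorem~\ref{thm:lics-main}. For each location $q \in Q$, set
\[
  G_q := \Zcl{\set{e^{A_q t} : t \in \Rp}} \subseteq \R^{d \times d}.
\]
Since $\set{e^{A_q t} : t \in \R}$ is a one-parameter subgroup of $\GL_d$, its closure is a connected algebraic group; in particular $G_q$ is a constructible (indeed algebraic) subset of $\R^{d \times d}$. I form $\mathcal{A}_c$ from $\mathcal{A}$ by setting every continuous dynamic to $A_q = 0$, keeping the existing edges and initial sets $T_q$, and adding at each $q$ a self-loop labelled by the constructible family $G_q$.

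The key claim is $V(\mathcal{A}) = V(\mathcal{A}_c)$. The inclusion $V(\mathcal{A}) \subseteq V(\mathcal{A}_c)$ is immediate since every $\mathcal{A}$-trajectory is an $\mathcal{A}_c$-trajectory. For the converse it suffices to check that $V(\mathcal{A})$ is an inductive invariant of $\mathcal{A}_c$, because $V(\mathcal{A}_c)$ is the least such invariant. The only new edges are the self-loops, so I must show that $G_q V_q \subseteq V_q$ for every $q$. Fix $q$ and consider
\[
  W := \set{M \in \R^{d \times d} : M V_q \subseteq V_q} = \bigcap_{x \in V_q} \set{M : Mx \in V_q}.
\]
Since $M \mapsto Mx$ is a polynomial map and $V_q$ is an affine variety, each set in the intersection is Zariski closed, hence so is $W$. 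By inductiveness of $V(\mathcal{A})$ under the continuous flow, $W$ contains $\set{e^{A_q t} : t \geq 0}$; taking Zariski closure gives $G_q \subseteq W$, as required.

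It remains to describe each $G_q$ effectively. I would compute a real Jordan form of $A_q$, expressing every entry of $e^{A_q t}$ as a real linear combination of terms of the form $t^k e^{\lambda_j t}\cos(\omega_j t)$ and $t^k e^{\lambda_j t}\sin(\omega_j t)$. The Zariski closure $G_q$ is then cut out by polynomial relations that reflect multiplicative and additive $\Q$-linear dependencies among the real and imaginary parts of the eigenvalues of $A_q$: the semisimple real factor contributes multiplicative relations among the $e^{\lambda_j t}$, the torus factor is governed by a Kronecker-type theorem applied to the $\omega_j$, and the unipotent factor contributes only polynomial relations in $t$. I expect this effective description of $G_q$, together with the verification that the resulting constructible set is correct and computable from algebraic data, to be the main technical obstacle. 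Once the $G_q$ are in hand, Theorem~\ref{thm:lics-main} applied to $\mathcal{A}_c$ produces the $V_q$; a further refinement that replaces each $G_q$ by a finite Zariski-dense subsemigroup (justified by the same stabiliser argument as above) then gives the alternative finite-affine-program presentation advertised in the introduction.
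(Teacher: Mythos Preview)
Your overall strategy matches the paper's: replace the continuous dynamics at each location by the self-loop family $G_q=\Zcl{\{e^{A_qt}:t\in\Rp\}}$, show this preserves the algebraic invariants, and then invoke Theorem~\ref{thm:lics-main} on the resulting constructible affine program. Your stabiliser argument for the inclusion $G_qV_q\subseteq V_q$ (defining $W=\{M:MV_q\subseteq V_q\}$, noting it is Zariski closed and contains the one-parameter group) is a pleasant variant of the paper's proof, which instead compares the operators $\Phi_{\mathcal{A}}$ and $\Phi_{\mathcal{A}'}$ directly and uses that $V(\mathcal{A})$ is the least fixed point of $X\mapsto\Zcl{\Phi_{\mathcal{A}'}(X)}$; both arguments amount to the same Zariski-continuity observation.

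Where your sketch diverges is in the effective computation of $G_q$, which you correctly flag as the crux. The paper avoids the real Jordan form entirely: it works over $\Cx$, writes $A=P(D+N)P^{-1}$ with $D$ diagonal and $N$ nilpotent commuting, and shows (Lemma~\ref{lem:ideal_lin_diff_eq_diag}) that for diagonal $D=\diag(\lambda_1,\ldots,\lambda_d)$ the ideal of $\Zcl{\mathcal{O}_D}$ is generated by binomials $z^a-z^b$ with $a-b$ in the lattice $L=\{n\in\Z^d:\sum n_i\lambda_i=0\}$ of \emph{integer} additive relations among the eigenvalues. A basis for $L$ is computable via Masser-type bounds. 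The nilpotent part is handled separately (since $e^{Nt}$ is polynomial in $t$), and one passes back to $\R$ only at the very end by intersecting with $\R^{d\times d}$. Your description in terms of ``multiplicative and additive $\Q$-linear dependencies'' and a ``Kronecker-type theorem'' for the imaginary parts is headed in the right direction but is imprecise: the relevant relations are $\Z$-linear, not $\Q$-linear, and there is no separate torus/Kronecker step once you work over $\Cx$. Adopting the complex-diagonal route would close the gap you identified.
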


\section{Linear Continuous Dynamics and Switching Systems}\label{sec:closure_lin_diff_eq}

The first step towards computing Zariski closure in the general case
is to be able to handle the case of one differential equation.  Write
$\Qbar$ for the field of algebraic numbers.  Let $A\in\Qbar^{d\times
  d}$ and $x_0\in\Qbar^{d\times d}$, then the solution to $x(0)=x_0$,
$x'(t)=Ax(t)$ is given by $x(t)=e^{At}x_0$. We are interested in
computing the Zariski closure of the orbit
$\Zcl{\set{x(t):t\in\Rp}}$. Since the map $\phi:M\mapsto Mx_0$ is
Zariski-continuous, we have that
\[
\Zcl{\set{x(t):t\in\Rp}}
    =\Zcl{\set{e^{At}x_0:t\in\Rp}}
    =\Zcl{\phi(\set{e^{At}:t\in\Rp})}
    =\Zcl{\phi(\Zcl{\set{e^{At}:t\in\Rp}})}
\]
and thus it suffices to compute $\Zcl{\set{e^{At}:t\in\Rp}}$.
Furthermore, let $\mathcal{O}_A:=\set{e^{At}:t\in\R}$, which is a commutative group. Then
$\Zcl{\set{e^{At}:t\in\Rp}}=\Zcl{\mathcal{O}_A}$: the left-to-right inclusion is clear.
The converse inclusion comes from the fact that an exponential polynomial (in one variable), being an analytic function,
vanishes over $\Rp$ if and only if it vanishes over $\R$. Thus it suffices
to compute $\Zcl{\mathcal{O}_A}$.

The following lemma gives a description of the ideal of the variety $\Zcl{\mathcal{O}_A}$
when $A$ is diagonal.

\begin{lemma}\label{lem:ideal_lin_diff_eq_diag}
    Let $A=\diag(\lambda_1,\ldots,\lambda_d)\in\Qbar^{d\times d}$ be a diagonal matrix,
    then \[\Zcl{\mathcal{O}_A}=\set{\diag(z_1,\ldots,z_d):\forall p\in I,p(z_1,\ldots,z_d)=0}\, , \]
    where $I=\left\langle z^a-z^b:a-b\in L\right\rangle$ and $L=\set{n\in\Z^d:n_1\lambda_1+\cdots+n_d\lambda_d=0}$.
    Furthermore, one can compute a basis for $L$ considered as an abelian group under addition.
\end{lemma}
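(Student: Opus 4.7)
The plan is to prove the stronger statement $\mathbf{I}(\mathcal{O}_A) = I$, from which the set-level equality $\Zcl{\mathcal{O}_A} = V(I)$ follows immediately (working over $\Cx$, which is algebraically closed). The critical analytic ingredient will be the classical linear independence of exponentials: for pairwise distinct complex numbers $\mu_1,\ldots,\mu_k$, the functions $t \mapsto e^{\mu_j t}$ are $\Cx$-linearly independent as functions $\R \to \Cx$. The easy direction $I \subseteq \mathbf{I}(\mathcal{O}_A)$ is a one-line check: for any generator $z^a - z^b$ of $I$, evaluating at the point $(e^{\lambda_1 t},\ldots,e^{\lambda_d t})$ of $\mathcal{O}_A$ gives $e^{(\sum_i a_i \lambda_i)t} - e^{(\sum_i b_i \lambda_i)t}$, and the two exponents coincide precisely because $a-b \in L$.

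The nontrivial direction is $\mathbf{I}(\mathcal{O}_A) \subseteq I$. Given $P = \sum_\alpha c_\alpha z^\alpha$ vanishing on $\mathcal{O}_A$, I group its monomials by the value of the linear form $\mu_\alpha := \sum_i \alpha_i \lambda_i$. Letting $\mu_1,\ldots,\mu_k$ denote the distinct values attained and $S_j := \{\alpha : \mu_\alpha = \mu_j\}$, the identity $P(e^{\lambda_1 t},\ldots,e^{\lambda_d t}) = 0$ rewrites as
\[\sum_{j=1}^k \Bigl(\sum_{\alpha \in S_j} c_\alpha\Bigr) e^{\mu_j t} = 0 \qquad \text{for all } t \in \R,\]
and linear independence of exponentials forces $\sum_{\alpha \in S_j} c_\alpha = 0$ for each $j$. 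Fix any representative $\beta_j \in S_j$; by definition of $S_j$, every $\alpha \in S_j$ satisfies $\alpha - \beta_j \in L$, so $z^\alpha - z^{\beta_j} \in I$. The decomposition
\[P \;=\; \sum_{j=1}^k \sum_{\alpha \in S_j} c_\alpha (z^\alpha - z^{\beta_j}) \;+\; \sum_{j=1}^k z^{\beta_j} \Bigl(\sum_{\alpha \in S_j} c_\alpha\Bigr)\]
then exhibits $P$ as a sum of an element of $I$ and the zero polynomial, so $P \in I$.

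For the computability of a $\Z$-basis of $L$, I would use that $\Q(\lambda_1,\ldots,\lambda_d)$ is a computable number field. Concretely, compute a $\Q$-basis $e_1,\ldots,e_r$ of the $\Q$-subspace $V := \Q\lambda_1 + \cdots + \Q\lambda_d$ and write $\lambda_i = \sum_j c_{ij} e_j$ with $c_{ij} \in \Q$. Then $n \in L$ iff the rational system $\sum_i n_i c_{ij} = 0$ holds for every $j$, so $L$ is the integer kernel of a rational matrix, from which a $\Z$-basis is obtained via Hermite normal form (saturation is automatic in characteristic $0$).

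The main conceptual obstacle is the reverse inclusion in the second step, but it is dispatched purely by linear independence of exponentials with distinct complex exponents; no Baker-style transcendence input is required, since the exponents $\mu_j$ are merely assumed pairwise distinct and not multiplicatively independent. The remaining steps are formal linear algebra.
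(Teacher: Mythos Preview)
Your proof of the ideal equality $\mathbf{I}(\mathcal{O}_A)=I$ is correct and follows the same strategy as the paper: both arguments hinge on the linear independence of the exponentials $t\mapsto e^{\mu t}$ for pairwise distinct $\mu\in\Cx$. The paper phrases the hard inclusion as a minimal-counterexample contradiction (pick $p\in \mathbf{I}(\mathcal{O}_A)\setminus I$ with fewest monomials, observe the exponents $\mu_i$ must be pairwise distinct, conclude all coefficients vanish), whereas you give the equivalent direct decomposition by grouping monomials into the fibres $S_j$ and peeling off the telescoping sum; these are two presentations of the same idea.

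The one genuine difference is the computability claim for a $\Z$-basis of $L$. The paper appeals to Masser's effective bounds~\cite{Mas88} on the size of generators of $L$ and then searches. Your route is more elementary and self-contained: since the $\lambda_i$ are algebraic, one can compute a $\Q$-basis of $\Span_{\Q}(\lambda_1,\ldots,\lambda_d)$ inside the number field $\Q(\lambda_1,\ldots,\lambda_d)$, express $L$ as the integer kernel of an explicit rational matrix, and read off a basis via Hermite (or Smith) normal form. This avoids any Diophantine-geometry input; Masser-type bounds are only really needed for \emph{multiplicative} relations, whereas additive relations among algebraic numbers reduce to exact $\Q$-linear algebra, exactly as you say.
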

\begin{proof}
Clearly $e^{At}=\diag(e^{\lambda_1t},\ldots,e^{\lambda_dt})$. Since the set of diagonal matrices is closed,
then the closure is of the form \[ \Zcl{\mathcal{O}_A}=\set{\diag(z_1,\ldots,z_n):p_1(z)=\cdots=p_k(z)=0}\]
for some polynomials $p_1,\ldots,p_k$. Thus, the ideal $I$ of the closure is generated by all polynomials
$p$ such that $p(e^{\lambda_1t},\ldots,e^{\lambda_dt})=0$ for all $t\in\R$. Let $J$ be the ideal of
all polynomials $x^a-x^b$ with $a,b\in\N^d$ such that $\lambda\cdot a=\lambda\cdot b$.
Clearly $J\subseteq I$ since if $\lambda\cdot a=\lambda\cdot b$, then
$(e^{\lambda_1 t})^{a_1}\cdots(e^{\lambda_d t})^{a_d}-(e^{\lambda_1 t})^{b_1}\cdots(e^{\lambda_d t})^{b_d}=e^{(\lambda\cdot a)t}-e^{(\lambda\cdot b)t}=0$ for all $t\in\R$.
Conversely, assume by contradiction that $p\in I\setminus J$ and write
$p=\sum_{i=1}^rb_im_i$
for some $b_i\in\Qbar$ and monomials $m_1,\ldots,m_r$. Further choose $p$ so that $r$ is minimal.
For each monomial $m_i(x)=x_1^{a_1}\cdots x_d^{a_d}$, let $\mu_i:=\lambda\cdot a$. Then we must have
$\mu_i\neq\mu_j$ for $i\neq j$ because otherwise $m_i-m_j\in J$ and $p-b_i(m_i-m_j)\in I\setminus J$
would have fewer terms than $p$. Since the maps $t\mapsto e^{\mu_1 t},\ldots,t\mapsto e^{\mu_d t}$
are linearly independent, it follows that $b_1=\cdots=b_r=0$ which is contradiction.
Thus we must have have $I=J$. It is immediate to see that $J$ is generated by all polynomials $x^a-x^b$
such that $\lambda\cdot(a-b)=0$, thus it suffices to compute $L=\set{a\in\Z^d:\lambda\cdot a=0}$,
the set of additive relations of $\lambda$. Notice that $L$ is an additive subgroup of $\Z^d$ and
as such must be finitely generated. An upper bound on the size of the elements of a basis of $L$
can be found in \cite{Mas88} and therefore we can compute a basis for $L$ and thus $J$, $I$ and $\Zcl{\mathcal{O}}_A$.
\end{proof}

A corollary of this result is that we can compute $\Zcl{\mathcal{O}_A}$ by separating
the diagonal and nilpotent parts of $A$. The fact that this closure is computable is
not new, a similar result exists in the litterature, although without proof \cite{CE05}.

\begin{proposition}\label{prop:closure_lin_diff_eq}
There is an algorithm that given $A\in\Qbar^{d\times d}$, computes $\Zcl{\set{e^{At}:t\in\Rp}}$.
\end{proposition}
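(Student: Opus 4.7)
My plan is to reduce to the diagonal case of Lemma \ref{lem:ideal_lin_diff_eq_diag} by means of the additive Jordan decomposition. Since $A \in \Qbar^{d \times d}$, one can effectively compute (over a finite extension of $\Q$ containing the eigenvalues of $A$) an invertible matrix $P$ with $P^{-1}AP = S + N$, where $S$ is diagonal, $N$ is nilpotent, and $SN = NS$. Conjugation $M \mapsto PMP^{-1}$ is a polynomial automorphism of the matrix space and hence commutes with Zariski closure, so it suffices to compute $\Zcl{\{e^{(S+N)t}:t\in\Rp\}}$ and then conjugate back by $P$.

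Let $\lambda_1,\ldots,\lambda_r \in \Qbar$ be the distinct eigenvalues of $A$. Because $N^d = 0$, the factor $e^{Nt} = \sum_{k=0}^{d-1} N^k t^k / k!$ is polynomial in $t$, so each entry of $e^{(S+N)t} = e^{St}e^{Nt}$ is a fixed, computable polynomial in the variables $t, e^{\lambda_1 t}, \ldots, e^{\lambda_r t}$ with algebraic coefficients. This defines a polynomial map $\Theta \colon \Cx^{1+r} \to \Cx^{d \times d}$ such that $e^{(S+N)t} = \Theta(\chi(t))$, where $\chi(t) := (t, e^{\lambda_1 t}, \ldots, e^{\lambda_r t})$. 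Zariski-continuity of $\Theta$ then yields
\[
    \Zcl{\{e^{(S+N)t}:t\in\Rp\}} = \Zcl{\Theta(\Zcl{\chi(\Rp)})}.
\]

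The crux of the argument is to show that $\Zcl{\chi(\Rp)} = \Cx \times V$, where $V := \Zcl{\{(e^{\lambda_1 t},\ldots,e^{\lambda_r t}):t\in\R\}}$ is provided by Lemma \ref{lem:ideal_lin_diff_eq_diag} (the passage from $\Rp$ to $\R$ is justified by the analyticity argument used just before that lemma). The inclusion $\subseteq$ is immediate. For the converse, I would take $p(t,z) = \sum_k t^k p_k(z)$ vanishing on $\chi(\Rp)$, expand each $p_k = \sum_a c_{k,a} z^a$, and substitute to get $\sum_{k,a} c_{k,a} t^k e^{(\lambda \cdot a)t} = 0$ for all $t \in \R$. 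Linear independence of the functions $\{t^k e^{\mu t}\}_{k\in\N,\mu\in\Cx}$ forces $\sum_{a : \lambda \cdot a = \mu} c_{k,a} = 0$ for every $k$ and every $\mu$. Since in $V$ any two monomials $z^a, z^b$ with $\lambda \cdot a = \lambda \cdot b$ are equal modulo $\mathbf{I}(V)$, this is exactly the statement that each $p_k$ lies in $\mathbf{I}(V)$; hence $p$ vanishes on $\Cx \times V$.

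The rest is routine effective computation. The variety $\Cx \times V$ has explicit defining polynomials, and the Zariski closure of the image $\Theta(\Cx \times V)$ under the polynomial map $\Theta$ can be computed by standard elimination, e.g.\ by taking a Gröbner basis of the ideal generated by $\mathbf{I}(V)$ together with the polynomials $\{X_{ij} - \Theta_{ij}(t,z)\}_{i,j}$, and eliminating the variables $t$ and $z$. Conjugating the resulting variety by $P$ recovers $\Zcl{\{e^{At}:t\in\Rp\}}$ in the original coordinates. The only non-mechanical step is the identification $\Zcl{\chi(\Rp)} = \Cx \times V$, which I expect to be the main obstacle; it relies on the classical linear independence of the exponential monomials $t^k e^{\mu t}$, and once this is in hand the procedure is essentially bookkeeping on top of Lemma \ref{lem:ideal_lin_diff_eq_diag}.
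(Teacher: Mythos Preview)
Your argument is correct and takes a somewhat different route from the paper's. Both proofs begin with the Jordan decomposition $A=P(D+N)P^{-1}$ and invoke Lemma~\ref{lem:ideal_lin_diff_eq_diag} for the diagonal part. The paper then treats the semisimple and nilpotent pieces \emph{separately}: it computes $\Zcl{\mathcal{O}_D}$ via the lemma and $\Zcl{\mathcal{O}_N}$ as the closure of the polynomial image $t\mapsto e^{Nt}$, and combines them through the group-theoretic identity $\Zcl{\mathcal{O}_{D+N}}=\Zcl{\Zcl{\mathcal{O}_D}\cdot\Zcl{\mathcal{O}_N}}$. You instead keep $t$ as an explicit extra coordinate, prove that the curve $\chi(t)=(t,e^{\lambda_1 t},\ldots,e^{\lambda_r t})$ has Zariski closure $\Cx\times V$ using the linear independence of the family $\{t^k e^{\mu t}\}$, and then push forward by the polynomial map $\Theta$ and eliminate. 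Your route is arguably more elementary: the paper's line ``$\mathcal{O}_A=P(\mathcal{O}_D\cdot\mathcal{O}_N)P^{-1}$'' is not literally true as an equality of sets (the right-hand side is a two-parameter family), and making it rigorous at the level of closures requires knowing that the semisimple and unipotent factors of each $e^{(D+N)t}$ already lie in $\Zcl{\mathcal{O}_{D+N}}$---a Jordan-decomposition fact about algebraic groups that your approach sidesteps entirely. The trade-off is that you must perform one Gr\"obner elimination to compute $\Zcl{\Theta(\Cx\times V)}$, whereas the paper reduces to a product of two already-computed varieties; both are effective, so this is purely a matter of presentation.
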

\begin{proof}
We can write $A=P(D+N)P^{-1}$ where $P$ is invertible, $D$ is diagonal, $N$
is nilpotent, and $D$ and $N$ commute. Notice that $\mathcal{O}_D$ only consists of diagonal matrices
and $\mathcal{O}_N$ of unipotent matrices. Since $\mathcal{O}_A$ is a commutative group,
and $D$ and $N$ commute, we have that $\mathcal{O}_A=P(\mathcal{O}_D\cdot\mathcal{O}_N)P^{-1}$,
and thus $\Zcl{\mathcal{O}_A}=P\Zcl{\Zcl{\mathcal{O}_D}\cdot\Zcl{\mathcal{O}_N}}P^{-1}$. All the
operations in this equation are effective thus it suffices to compute $\Zcl{\mathcal{O}_D}$ as
explained above, and $\Zcl{\mathcal{O}_N}$. But since $N$ is nilpotent, $q_n(t):=e^{Nt}$ is really a polynomial
in $Nt$ and thus in $t$. It follows that $\Zcl{\mathcal{O}_N}=\Zcl{q_n(\R)}=\Zcl{q_n(\Zcl{\R})}=\Zcl{q_n(\Cx)}$
which we know how to compute.
\end{proof}

Next we consider the class of \emph{switching systems}, that is,
hybrid systems in which every pair of locations is connected by an
edge and in which variables are no updated on discrete edges.  It is
known that reachability is undecidable even for this restricted
class~\cite{OPPW16} of systems.  However, as we show below, building
on Proposition~\ref{prop:closure_lin_diff_eq}, one can compute
computing strongest algebraic invariants in this setting.

\RestyleAlgo{ruled}
\begin{procedure}
\Input{$A_1,\ldots,A_k\in \Qbar^{d\times d}$}
$H:=\{I_d\}$ \; \label{alg:noreset:initH} 
$G_1:=\Zcl{ \{ e^{A_1t}:t\geq 0\}} ; \ldots ; G_k:=\Zcl{ \{ e^{A_1t}:t\geq 0\}}$\; \label{alg:noreset:inits}
\Repeat{$H_{old}=H$}{ \label{alg:noreset:loop:start}
    $H_{old}:=H$\;
    \For{$i \in \{1,\ldots,k\}$}{
        $H:=\Zcl{H\cdot  G_i}$ \; \label{alg:noreset:newH2}
    }
}
\label{alg:noreset:loop:end}
\Output{$H$}
\caption{Semigroup-Closure($A_1,\ldots,A_k$)}
\label{alg:noreset}
\end{procedure}

\begin{proposition}\label{prop:switched}
  Algorithm~\ref{alg:noreset} (shown below) terminates and outputs the
  Zariski closure of the sub-semigroup of $\mathrm{GL}_d(\Cx)$
  generated by $\{ e^{A_1t} ,\ldots, e^{A_kt}: t\geq 0\}$.
\end{proposition}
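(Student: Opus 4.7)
The plan is to establish termination and correctness separately, leaning on the Noetherianity of the Zariski topology for the former and on Zariski-continuity of matrix multiplication for the latter. Write $S$ for the sub-semigroup of $\mathrm{GL}_d(\Cx)$ generated by $\bigcup_{i=1}^k\set{e^{A_it}:t\geqslant0}$; the goal is to show the algorithm terminates with $H=\Zcl{S}$. The first thing I would note is that each inner assignment $H:=\Zcl{H\cdot G_i}$ is monotone, because $I_d=e^{A_i\cdot 0}\in G_i$ forces $H\subseteq H\cdot G_i\subseteq\Zcl{H\cdot G_i}$. Hence the successive values of $H$ form an ascending chain of Zariski-closed subsets of $\Cx^{d\times d}$; by Noetherianity this chain stabilises after finitely many steps, and that is precisely when the outer repeat-until guard $H_{old}=H$ is triggered.

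For the first containment I would maintain the loop invariant $H\subseteq\Zcl{S}$. It holds at initialisation since $I_d\in S$. The inductive step uses the standard inclusion $\Zcl{X}\cdot\Zcl{Y}\subseteq\Zcl{X\cdot Y}$, valid for arbitrary subsets $X,Y\subseteq\Cx^{d\times d}$; this follows from Zariski-continuity of matrix multiplication applied twice, first to $x\mapsto xy$ for fixed $y\in Y$, and then to $y\mapsto xy$ for fixed $x\in\Zcl{X}$. Since $S$ is a sub-semigroup containing the generators $\set{e^{A_it}:t\geqslant0}$, we have $S\cdot\set{e^{A_it}:t\geqslant0}\subseteq S$, hence $\Zcl{S}\cdot G_i\subseteq\Zcl{S}$, and therefore $\Zcl{H\cdot G_i}\subseteq\Zcl{\Zcl{S}\cdot G_i}\subseteq\Zcl{S}$, closing the invariant.

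For the matching inclusion I would exploit termination: since every intermediate update only enlarges $H$ while a full sweep returns $H$ to its starting value, each individual update must in fact be stationary, so $H\cdot G_i\subseteq H$ for every $i$. Starting from $I_d\in H$ and absorbing generators $e^{A_{i_j}t_j}\in G_{i_j}$ one at a time, a trivial induction shows that every product $e^{A_{i_1}t_1}\cdots e^{A_{i_n}t_n}$ lies in $H$; thus $S\subseteq H$, and because $H$ is closed we conclude $\Zcl{S}\subseteq H$. Together with the invariant this yields $H=\Zcl{S}$ as required. The step I would be most careful about is the effectiveness of line~\ref{alg:noreset:newH2}: computing $\Zcl{H\cdot G_i}$ amounts to computing the Zariski closure of the image of the variety $H\times G_i$ under the polynomial map $(M,N)\mapsto MN$, which is a constructible set whose closure can be extracted by standard Gröbner-basis elimination, so the line is indeed effective.
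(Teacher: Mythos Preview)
Your correctness argument is fine and in fact more detailed than the paper's, which simply asserts that the terminal value is the desired closure.  The problem is your termination argument.  Noetherianity of the Zariski topology is the \emph{descending} chain condition on closed sets (equivalently, the ascending chain condition on ideals); it does \emph{not} bound ascending chains of closed sets.  Indeed, in $\Cx^{d\times d}$ there are infinite strictly ascending chains of Zariski-closed sets---already $\{I_d\}\subsetneq\{I_d,2I_d\}\subsetneq\{I_d,2I_d,3I_d\}\subsetneq\cdots$ shows this.  So ``by Noetherianity this chain stabilises'' does not follow.

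What the paper does to close this gap is to show that every intermediate value of $H$ is an \emph{irreducible} variety.  The initial value $\{I_d\}$ is irreducible; each $G_i=\Zcl{\{e^{A_it}:t\geqslant0\}}$ is a connected linear algebraic group, hence irreducible; and the Zariski closure of the product of two irreducible sets is again irreducible.  For irreducible varieties a strict inclusion forces a strict increase of dimension, so the chain of $H$'s has length at most $\dim\mathrm{GL}_d(\Cx)=d^2$, and the outer loop executes at most $d^2$ times.  Once you insert this irreducibility observation, your argument goes through; without it, termination is unproven.
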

\begin{proof} 
  First note that the effectiveness of Line~\ref{alg:noreset:inits}
  relies on Proposition~\ref{prop:closure_lin_diff_eq}.

  Now we argue that $H$ in the algorithm is always an irreducible
  variety.  For this it suffices to show that if
  $X \subseteq \mathrm{GL}_d(\Cx)$ is an irreducible variety then so
  is $Y:=\Zcl{X \cdot \Zcl{\{e^{At} : t\geq 0\}}}$ for any matrix
  $A \in \Cx^{d\times d}$. First observe that if $X$ and $Z$ are
  irreducible sets then so is $\Zcl{X\cdot Z}$, thus is it enough to
  show that $G=\Zcl{S}$ is irreducible, where $S=\{e^{At} : t\geq 0\}$.
  But $S$ is a semigroup, thus $G$ is a group. This makes $G$ a linear
  algebraic group, which is therefore irreducible if and only if it
  is (Zariski-)connected. But $G$ being the closure of $S$ means it is
  enough to show that $S$ is Zariski-connected, and hence enough to show
  that it is Euclidean-connected. The latter is trivial since every element
  of $S$ is path-connected to $I_d$.

Now a strictly increasing chain $H_1\subseteq H_2\subseteq \cdots$ of
irreducible sub-varieties of $\mathrm{GL}_d(\Cx)$ has length at most
the dimension of $\mathrm{GL}_d(\Cx)$, which is $d^2$.  Thus
Algorithm~\ref{alg:noreset} terminates after at most $d^2$ iterations
of the outer loop.  It is clear that the terminating value of the
algorithm is the Zariski closure of the sub-semigroup of
$\mathrm{GL}_d(\Cx)$ generated by
$\{ e^{A_1t} ,\ldots, e^{A_kt}: t\geq 0\}$.
\end{proof}

Now consider a switching system $\mathcal{A}=(Q,A,E,T)$.  Let $G$ be
the Zariski closure of the semigroup generated by the matrices
$e^{A_qt}$, for $q\in Q$ and $t\geq 0$, which can be computed by
Proposition~\ref{prop:switched}.  Then $V(\mathcal{A})=\{V_q:q\in
Q\}$, the real Zariski closure of the collecting semantics of
$\mathcal{A}$, is such that $V_q = \Zcl{G\cdot X}\cap
\mathrm{GL}_d(\mathbb{R})$ for every $q\in Q$, where $X=\bigcup_{q\in
  Q} T_q$.  But then $V(\mathcal{A})$ is computable from $G$ and $T$.

\begin{theorem}
  Given a switching system $\mathcal{A}$, one can compute
  $V(\mathcal{A})$.
\end{theorem}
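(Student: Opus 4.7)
The plan is to expand on the roadmap already sketched in the paragraph preceding the theorem. The first step is to exploit the defining feature of a switching system: since every ordered pair $(p,q)$ of locations is connected by an edge labelled by the identity, the fixed-point inequality $S_q \supseteq I\cdot S_p = S_p$ together with its symmetric counterpart forces $S_p = S_q$ for every $p,q \in Q$. Let $S$ denote this common value. The remaining conditions of the collecting semantics then collapse to $S \supseteq X$ and $S \supseteq e^{A_q t}S$ for every $q\in Q$ and $t\geq 0$, where $X := \bigcup_{q\in Q} T_q$. Iterating these conditions gives the closed form $S = \Sigma \cdot X$, where $\Sigma$ is the sub-semigroup of $\GL_d(\R)$ generated by $\{e^{A_q t}: q\in Q,\ t\geq 0\}$.

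Next, I pass to Zariski closures. Writing $G := \Zcl{\Sigma}$, Zariski-continuity of the matrix-vector multiplication map $\mu:(M,x)\mapsto Mx$ (which is a polynomial map $\Cx^{d\times d}\times \Cx^d \to \Cx^d$) yields the standard identity
\[
\Zcl{\mu(\Sigma \times X)} \;=\; \Zcl{\mu(\Zcl{\Sigma}\times \Zcl{X})}\;=\;\Zcl{G\cdot X},
\]
where I have also used that $X$, being a finite union of the algebraic sets $T_q$, is already Zariski closed. Hence $\Zcl{S} = \Zcl{G\cdot X}$, and applying the identity $\Zcl[\R]{S} = \Zcl{S}\cap \R^d$ from Section~\ref{sec:math} gives the uniform formula
\[
V_q \;=\; \Zcl{G\cdot X}\cap \R^d \qquad \text{for every } q \in Q.
\]

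It remains to check that each ingredient of this formula is effectively computable. The variety $G$ is produced by Proposition~\ref{prop:switched} (with Proposition~\ref{prop:closure_lin_diff_eq} providing the initial closures in the algorithm). The variety $X = \bigcup_{q\in Q} T_q$ is obtained directly from the data defining $\mathcal{A}$; its defining ideal is the intersection of the ideals of the $T_q$, which is computable by Gr\"obner-basis methods. The Zariski closure of the image of the polynomial map $\mu$ restricted to $G\times X$ is then computable by standard elimination theory applied to the ideal of $G\times X$ together with the equations $y_i = \sum_j M_{ij} x_j$. Finally, intersecting the resulting complex variety with $\R^d$ is effective by the procedure recalled at the end of Section~\ref{sec:math}.

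No step presents a real difficulty here, since essentially all the work is already encapsulated in Propositions~\ref{prop:closure_lin_diff_eq} and~\ref{prop:switched} together with Section~\ref{sec:math}; the only point requiring a brief argument is the collapse $S_p = S_q$ and the resulting closed form $S = \Sigma\cdot X$, which makes the whole reduction possible. The subtle bookkeeping lies in justifying the closure identity $\Zcl{\Sigma\cdot X} = \Zcl{G\cdot X}$ via Zariski-continuity, but this is immediate once one observes that $X$ is already closed.
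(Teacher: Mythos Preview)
Your proof is correct and follows exactly the approach the paper sketches in the paragraph preceding the theorem: collapse all $S_q$ to a common $S=\Sigma\cdot X$ using the identity edges, pass to closures via Zariski-continuity of multiplication to get $V_q=\Zcl{G\cdot X}\cap\R^d$, and invoke Proposition~\ref{prop:switched} plus standard elimination for effectiveness. Your write-up simply fills in the details the paper leaves implicit (and, incidentally, your $\cap\,\R^d$ is the intended intersection---the paper's $\cap\,\mathrm{GL}_d(\mathbb{R})$ is evidently a typo).
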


\section{Reducing Continuous Dynamics to Constructible Discrete Dynamics}
\label{sec:const-discretisation}
\begin{lemma}\label{lem:least_closed_fixpoint}
Let $\mathcal{A}$ be an automaton, then $V(\mathcal{A})$ is the least fixpoint
of the map $X\mapsto\Zcl{\Phi_\mathcal{A}(X)}$.
\end{lemma}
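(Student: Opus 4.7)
The result is a direct application of the Knaster--Tarski fixpoint theorem, reusing the inductiveness observations already established in Section~\ref{sec:alg_invariants_hybrid}. Let $\Psi(X)_q := \Zcl{\Phi_\mathcal{A}(X)_q}$, interpreted componentwise. Since $\Phi_\mathcal{A}$ is monotone (as already noted) and Zariski closure is monotone, $\Psi$ is a monotone operator on the complete lattice $\mathcal{P}(\R^d)^Q$ and hence admits a least fixpoint; the task is to identify this least fixpoint with $V(\mathcal{A})$.

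First I would verify that $V(\mathcal{A})$ is itself a fixpoint of $\Psi$. The inclusion $\Psi(V(\mathcal{A})) \subseteq V(\mathcal{A})$ follows from the inductive-invariant argument given in Section~\ref{sec:alg_invariants_hybrid}: one has $T_q \subseteq V_q$, and the Zariski-continuity of $x \mapsto Bx$ and $x \mapsto e^{A_q t} x$ yields $B V_p \subseteq V_q$ for every edge $(p,B,q) \in E$ and $e^{A_q t} V_q \subseteq V_q$ for every $t \in \Rp$. Since each $V_q$ is already Zariski-closed, taking closures preserves the inclusion and gives $\Psi(V(\mathcal{A})) \subseteq V(\mathcal{A})$. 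For the reverse inclusion I would exploit that $S = \Phi_\mathcal{A}(S)$ with $S \subseteq V(\mathcal{A})$; by monotonicity $\Phi_\mathcal{A}(V(\mathcal{A}))_q \supseteq \Phi_\mathcal{A}(S)_q = S_q$, so $\Psi(V(\mathcal{A}))_q \supseteq \Zcl{S_q} = V_q$.

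It then remains to show that $V(\mathcal{A})$ lies below every other fixpoint of $\Psi$. Suppose $W = \Psi(W)$. Then $\Phi_\mathcal{A}(W) \subseteq \Zcl{\Phi_\mathcal{A}(W)} = W$, so $W$ is a prefixpoint of $\Phi_\mathcal{A}$; Knaster--Tarski applied to $\Phi_\mathcal{A}$ itself yields $S \subseteq W$. Moreover each $W_q = \Psi(W)_q$ is by construction Zariski-closed, hence $V_q = \Zcl{S_q} \subseteq W_q$, completing the argument. There is no significant obstacle here; the only subtle point worth flagging is that although $\Psi$ is defined on arbitrary families of subsets, every fixpoint of $\Psi$ is automatically componentwise Zariski-closed, which is precisely what lets us pass from $S \subseteq W$ to $\Zcl{S} \subseteq W$ in the final step.
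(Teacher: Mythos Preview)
Your argument is correct and follows essentially the same route as the paper: both establish that $V(\mathcal{A})$ is a fixpoint of $X\mapsto\Zcl{\Phi_\mathcal{A}(X)}$ and then show minimality by observing that any fixpoint $W$ is automatically Zariski-closed and a pre-fixpoint of $\Phi_\mathcal{A}$, whence $S\subseteq W$ and $V(\mathcal{A})=\Zcl{S}\subseteq W$. The only cosmetic difference is that the paper verifies the fixpoint property via a single chain of equalities $\Zcl{S}=\Zcl{\Phi_\mathcal{A}(S)}=\Zcl{\Phi_\mathcal{A}(\Zcl{S})}$ (invoking Zariski-continuity of $\Phi_\mathcal{A}$ directly), whereas you prove the two inclusions separately by appealing to the inductiveness discussion and to monotonicity; these are the same underlying facts, just unpacked.
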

\begin{proof}
Let $S$ denotes the collecting semantics of $\mathcal{A}$, then
\begin{align*}
V(\mathcal{A})
    &=\Zcl{S}&&\text{by definition of $V(\mathcal{A})$}\\
    &=\Zcl{\Phi_\mathcal{A}(S)}&&\text{by definition of $S$}\\
    &=\Zcl{\Phi_\mathcal{A}(\Zcl{S})}&&\text{by Zariski-continuity of $\Phi_\mathcal{A}$}\\
    &=\Zcl{\Phi_\mathcal{A}(V(\mathcal{A}))}
\end{align*}
thus $V(\mathcal{A})$ is indeed a fixed-point.
Conversely, let $X$ be such that $X=\Zcl{\Phi_\mathcal{A}(X)}$. Then $X$ is closed and clearly
$\Phi_\mathcal{A}(X)\subseteq\Zcl{\Phi_\mathcal{A}(X)}=X$ so it is a pre-fixpoint of $\Phi_\mathcal{A}$.
By virtue of $S$ being the least (pre-)fixpoint of $\Phi_\mathcal{A}$ we must have $S\subseteq X$.
But then $\Zcl{S}\subseteq X$ \ie{} $V(\mathcal{A})\subseteq X$.
\end{proof}

\begin{proposition}\label{prop:lin_hybrid_to_constr_aff}
  Given a linear hybrid automaton $\mathcal{A}$, one can compute a
  constructible affine program $\mathcal{A}'$ that has the same
  algebraic invariants, \ie{} $V(\mathcal{A})=V(\mathcal{A}')$.
\end{proposition}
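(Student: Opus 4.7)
My plan is to remove the continuous dynamics at each location of $\mathcal{A}$ and replace them by a self-loop whose label matrix ranges over the Zariski closure of the one-parameter semigroup at that location. Concretely, for each $q \in Q$ I would first invoke Proposition~\ref{prop:closure_lin_diff_eq} to compute the algebraic set $G_q := \Zcl{\set{e^{A_q t}:t\geq 0}}\subseteq \R^{d \times d}$, obtaining explicit defining polynomials. I would then take $\mathcal{A}' = (Q, \set{0}_{q\in Q}, E', \set{T_q}_{q\in Q})$ with
\[E' := E \cup \bigcup_{q \in Q}\set{(q, M, q) : M \in G_q}.\]
Since $E$ is finite and each $G_q$ is algebraic with computable defining polynomials, $E'$ is a constructible subset of $Q \times \R^{d \times d} \times Q$ presented effectively, so $\mathcal{A}'$ is a bona fide constructible affine program.

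The heart of the argument is to establish $V(\mathcal{A}) = V(\mathcal{A}')$. By Lemma~\ref{lem:least_closed_fixpoint} both quantities are characterised as least fixpoints of the respective operators $X \mapsto \Zcl{\Phi(X)}$, and the transfinite iteration starting from $\emptyset$ produces only algebraic families, so it suffices to show $\Zcl{\Phi_\mathcal{A}(X)} = \Zcl{\Phi_{\mathcal{A}'}(X)}$ whenever $X = \set{X_q}_{q\in Q}$ is an algebraic family. Expanding $\Phi_\mathcal{A}(X)_q$ and $\Phi_{\mathcal{A}'}(X)_q$ at a fixed location $q$, and using that Zariski closure commutes with the finite unions indexed by $T_q$ and by $E$, everything cancels except the contribution of the continuous dynamics: the task reduces to the single identity
\[\Zcl{\bigcup_{t \geq 0} e^{A_q t} X_q} = \Zcl{G_q X_q}.\]

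The key step is this last identity. I expect it to follow cleanly from Zariski-continuity of matrix-vector multiplication. Writing $O_q := \set{e^{A_q t}:t\geq 0}$, one has $\Zcl{O_q \times X_q} = \Zcl{O_q} \times \Zcl{X_q} = G_q \times X_q$ since $X_q$ is already closed; the polynomial map $(M, x) \mapsto Mx$ is Zariski-continuous, so sends $G_q \times X_q$ into $\Zcl{O_q X_q}$, giving $\Zcl{G_q X_q} \subseteq \Zcl{O_q X_q}$. The reverse inclusion is immediate from $O_q \subseteq G_q$. I do not foresee any genuine obstacle: the argument mirrors the Zariski-continuity reasoning already used in Section~\ref{sec:alg_invariants_hybrid} to show that $V(\mathcal{A})$ is inductive, and the constructibility and effective presentation of $E'$ are delivered by Proposition~\ref{prop:closure_lin_diff_eq}.
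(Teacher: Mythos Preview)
Your construction of $\mathcal{A}'$ and the reduction to the identity $\Zcl{O_q X_q}=\Zcl{G_q X_q}$ match the paper's proof exactly; the paper likewise obtains this identity from Zariski continuity of multiplication and then concludes via Lemma~\ref{lem:least_closed_fixpoint}. One minor remark: the restriction to algebraic $X$ and the appeal to transfinite iteration are unnecessary, since your continuity argument for the key identity works for arbitrary $X$ (replace $X_q$ by $\Zcl{X_q}$ in the product step), so the two operators $X\mapsto\Zcl{\Phi(X)}$ literally coincide and the least-fixpoint comparison is immediate.
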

\begin{proof}
The idea is to replace each continuous dynamics $x'=A_qx$ by a closed (and thus constructible)
set of discrete transitions: $\Zcl{\set{e^{A_qt}:t\in\Rp}}$, which we know how to compute thanks
to Proposition~\ref{prop:closure_lin_diff_eq}
. Graphically:
\begin{center}
\begin{tikzpicture}[
        mynode/.style={fill=blue!30!white,circle,draw},
        mynode final/.style={mynode,double, double distance=0.5mm},
        myarrow/.style={->,thick},
        scale=0.7
    ]
    \node[mynode] at (0,0) (n1) {$x'=Ax$};
    \path (n1) ++(20:-2) edge[myarrow] (n1);
    \path (n1) ++(0:-2) edge[myarrow] (n1);
    \path (n1) ++(-20:-2) edge[myarrow] (n1);
    \path (n1) edge[myarrow] ++(-20:2);
    \path (n1) edge[myarrow] ++(0:2);
    \path (n1) edge[myarrow] ++(20:2);
    \path (3,0) edge[very thick,decorate, decoration={snake,amplitude=0.5ex,post=lineto,post length=1ex},-{Straight Barb[width=2ex]}] ++(1,0);
    \begin{scope}[shift={(7,0)}]
    \node[mynode] at (0,0) (n1) {$x'=0$};
    \path (n1) ++(20:-2) edge[myarrow] (n1);
    \path (n1) ++(0:-2) edge[myarrow] (n1);
    \path (n1) ++(-20:-2) edge[myarrow] (n1);
    \path (n1) edge[myarrow] ++(-20:2);
    \path (n1) edge[myarrow] ++(0:2);
    \path (n1) edge[myarrow] ++(20:2);
    \path (n1) edge[myarrow,loop,in=80,out=100,min distance=5ex] node[midway,above=-1ex] {$\Zcl{\set{e^{At}:t\in\Rp}}$} (n1);
    \end{scope}
\end{tikzpicture}
\end{center}
Formally, let $\mathcal{A}=(Q,A,E,T)$ be a linear hybrid automaton. Define the constructible
affine program $\mathcal{A}'=(Q,A',E',T)$ where $A'_q=0$ for all $q\in\Q$ and
\[E'=E\cup\set{(q,X,q):q\in\Q,X\in\Zcl{\mathcal{O}_{A_q}}}\]
where $\mathcal{O}_A:=\set{e^{At}:t\in\Rp}$. Note that it is constructible because $\Zcl{\mathcal{O}_{A_q}}$
is closed (and thus constructible). We will now relate $\Phi_\mathcal{A}$ and $\Phi_{\mathcal{A}'}$:
let $X\in\mathcal{P}(\Cx^d)^Q$ and observe that
\begin{align*}
\Phi_\mathcal{A}(X)_q
    &=T_q\cup(\mathcal{O}_{A_q}\cdot X_q)\cup\bigcup_{(p,B,q)\in E}BX_q\\
    &\subseteq T_q\cup(\Zcl{\mathcal{O}_{A_q}}\cdot X_q)\cup\bigcup_{(p,B,q)\in E}BX_q\\
    &=\Phi_{\mathcal{A}'}(X)_q.
\end{align*}
Let $S^\mathcal{A}$ (resp. $S^{\mathcal{A}'}$) denote the collecting semantics of $\mathcal{A}$ (resp. $\mathcal{A}'$),
that is $S^\mathcal{A}$ is the least fixpoint of $\Phi_\mathcal{A}$.
It follows that
\[\Phi_\mathcal{A}(S^\mathcal{A'})\subseteq\Phi_{\mathcal{A}'}(S^\mathcal{A'})=S^\mathcal{A'},\]
\ie{} $S^\mathcal{A'}$ is a pre-fixpoint of $\Phi_\mathcal{A}$ and thus it must be the case that
$S^\mathcal{A}\subseteq S^\mathcal{A'}$. It immediately follows that $V(\mathcal{A})\subseteq V(\mathcal{A}')$.
Conversely, we also have that
\begin{align*}
\Zcl{\Phi_\mathcal{A}(S)_q}
    &=\Zcl{T_q\cup(\mathcal{O}_{A_q}\cdot S_q)\cup\bigcup_{(p,B,q)\in E}BS_q}\\
    &=\Zcl{T_q\cup\Zcl{\mathcal{O}_{A_q}\cdot S_q}\cup\bigcup_{(p,B,q)\in E}BS_q}\\
    &=\Zcl{T_q\cup(\Zcl{\mathcal{O}_{A_q}}\cdot S_q)\cup\bigcup_{(p,B,q)\in E}BS_q}\\
    &=\Zcl{\Phi_{\mathcal{A}'}(S)_q}
\end{align*}
and similarly for $\mathcal{A}'$.
It follows that
\[V(\mathcal{A})_q
    =\Zcl{\Phi_\mathcal{A}(S^\mathcal{A})_q}
    =\Zcl{\Phi_{\mathcal{A}'}(S^\mathcal{A})_q}
    =\Zcl{\Phi_{\mathcal{A}'}(\Zcl{S^\mathcal{A})_q}}=\Zcl{\Phi_{\mathcal{A}'}(V(\mathcal{A}))_q}.
\]
Thus $V(\mathcal{A})$ is a fixpoint of $X\mapsto\Zcl{\Phi_{\mathcal{A}'}(X)}$. But $V(\mathcal{A}')$
is the least fixpoint of this map by Lemma~\ref{lem:least_closed_fixpoint} and thus $V(\mathcal{A}')\subseteq V(\mathcal{A})$.
\end{proof}

\section{Reducing Continuous Dynamics to Finite Discrete Dynamics}\label{sec:fin-discretisation}

In the previous section, we saw that given a linear hybrid automaton
one can compute a constructible affine program with the same set of
algebraic invariants.  In this section we show how to compute a
\emph{finite} affine program with the same set of algebraic
invariants.  The idea is to replace the continuous evolution of the
variables in each location of a hybrid automaton with a finite set of
discrete transitions.  Graphically this corresponds to rewriting the
automaton as follows:
\begin{center}
\begin{tikzpicture}[
        mynode/.style={fill=blue!30!white,circle,draw},
        mynode final/.style={mynode,double, double distance=0.5mm},
        myarrow/.style={->,thick},
        scale=0.7
    ]
    \node[mynode] at (0,0) (n1) {$x'=Ax$};
    \path (n1) ++(20:-2) edge[myarrow] (n1);
    \path (n1) ++(0:-2) edge[myarrow] (n1);
    \path (n1) ++(-20:-2) edge[myarrow] (n1);
    \path (n1) edge[myarrow] ++(-20:2);
    \path (n1) edge[myarrow] ++(0:2);
    \path (n1) edge[myarrow] ++(20:2);
    \path (3,0) edge[very thick,decorate, decoration={snake,amplitude=0.5ex,post=lineto,post length=1ex},-{Straight Barb[width=2ex]}] ++(1,0);
    \begin{scope}[shift={(7,0)}]
    \node[mynode] at (0,0) (n1) {$x'=0$};
    \path (n1) ++(20:-2) edge[myarrow] (n1);
    \path (n1) ++(0:-2) edge[myarrow] (n1);
    \path (n1) ++(-20:-2) edge[myarrow] (n1);
    \path (n1) edge[myarrow] ++(-20:2);
    \path (n1) edge[myarrow] ++(0:2);
    \path (n1) edge[myarrow] ++(20:2);
    \path (n1) edge[myarrow,loop,in=120,out=140,min distance=7ex] node[midway,above] {$B_1$} (n1);
    \path (n1) edge[myarrow,loop,in=80,out=100,min distance=5ex] node[midway,above] {$\ldots$} (n1);
    \path (n1) edge[myarrow,loop,in=40,out=60,min distance=7ex] node[midway,above] {$B_k$} (n1);
    \end{scope}
\end{tikzpicture}
\end{center}

Mathematically, the task is as follows: \emph{Given
  $A\in\Qbar^{d\times d}$, find $B_1,\ldots,B_k\in\Qbar^{d\times d}$
  such that $\Zcl{\set{e^{At}:t\in\R}}=\Zcl{\Gen{B_1,\ldots,B_k}}$.}
There is a conceptually simple approach to this problem: namely for
every matrix $A\in \mathbb{A}^{d\times d}$ and rational number $\tau$
we have $\Zcl{\set{e^{At}:t\in\R}}=\Zcl{\Gen{e^{A\tau}}}$ (see
Section~\ref{sec:time-disc}).  But this does not fulfil our
desiderata, since it is not possible in general to find $\tau\in\R$
such that $e^{A\tau}$ has exclusively algebraic entries.  Nevertheless
given $A\in \mathbb{A}^{d\times d}$ it is possible to find
$B\in\mathbb{A}^{d\times d}$ such that
$\Zcl{\set{e^{At}:t\in\R}}=\Zcl{\Gen{B}}$.  The idea is to construct
$B$ such that there is a correspondence between the set of additive
relations satisfied by the eigenvalues of $A$ and the multiplicative
relations satisfied by the eigenvalues of $B$.

\begin{proposition}\label{prop:synth}
Let $a_1,\ldots,a_d\in\mathbb{C}$ be algebraic numbers.  Then we can
compute rational numbers  $\lambda_1,\ldots,\lambda_d$
such that $a_1n_1+\cdots+a_dn_d=0$ iff $\lambda_1^{n_1} \cdots
\lambda_d^{n_d}=1$ for all $n_1,\ldots,n_d \in \mathbb{Z}$.
\end{proposition}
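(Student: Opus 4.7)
The plan is to build the $\lambda_i$ so that the lattice of multiplicative relations among them, as a subgroup of $\mathbb{Z}^d$, coincides with the lattice $L=\{n\in\mathbb{Z}^d:\sum_i n_i a_i=0\}$ of additive relations among the $a_i$. Because rational numbers factor uniquely into primes, writing each positive rational $\lambda_i=\prod_p p^{e_{i,p}}$ turns a multiplicative relation $\prod_i \lambda_i^{n_i}=1$ into the condition $\sum_i n_i e_{i,p}=0$ for every prime $p$. Thus the group of multiplicative relations of the $\lambda_i$ is precisely the left kernel of the exponent matrix $E=(e_{i,p})$, and our task reduces to realising $L$ as such a left kernel by a suitable choice of $E$.

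First I would invoke Masser's bound \cite{Mas88}, exactly as in the proof of Lemma~\ref{lem:ideal_lin_diff_eq_diag}, to compute a $\mathbb{Z}$-basis $b_1,\ldots,b_r$ of $L$. Next I would use the fact that the quotient $\mathbb{Z}^d/L$ is torsion-free (if $kn\in L$ then $k\sum_i n_i a_i=0$, hence $n\in L$), so $L$ is a primitive sublattice of $\mathbb{Z}^d$ and therefore a direct summand. Computing a complement is effective via the Hermite (or Smith) normal form: extend $b_1,\ldots,b_r$ to a $\mathbb{Z}$-basis $b_1,\ldots,b_r,c_1,\ldots,c_{d-r}$ of $\mathbb{Z}^d$.

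Now pick any $d-r$ distinct primes $p_1,\ldots,p_{d-r}$ and define the group homomorphism $\phi:\mathbb{Z}^d\to\mathbb{Q}_{>0}^*$ by sending $b_j$ to $1$ and $c_j$ to $p_j$. Setting $\lambda_i:=\phi(e_i)$, where $e_i$ is the $i$-th standard basis vector of $\mathbb{Z}^d$, yields positive rationals with $\prod_i\lambda_i^{n_i}=\phi(n)$ for every $n\in\mathbb{Z}^d$. The kernel of $\phi$ is exactly $L$: writing $n=\sum_j\alpha_j b_j+\sum_j\beta_j c_j$ one has $\phi(n)=p_1^{\beta_1}\cdots p_{d-r}^{\beta_{d-r}}$, which equals $1$ iff every $\beta_j=0$ iff $n\in L$. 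Concretely, once one inverts the basis change, $\lambda_i=\prod_j p_j^{\beta_{i,j}}$ where the $\beta_{i,j}\in\mathbb{Z}$ are the coordinates of $e_i$ along the $c_j$, so the $\lambda_i$ are explicit computable rationals.

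The only genuine difficulty is computing a basis of $L$ from the $a_i$, since this requires the effective bounds on heights of additive relations among algebraic numbers supplied by \cite{Mas88}; once that is in hand, the remaining steps (Hermite normal form, choice of primes, output of the $\lambda_i$) are straightforward constructive linear algebra over $\mathbb{Z}$.
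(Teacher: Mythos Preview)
Your proof is correct and follows essentially the same approach as the paper: both recognise that $L$ is a primitive sublattice of $\mathbb{Z}^d$, split $\mathbb{Z}^d=L\oplus C$ via Smith/Hermite normal form, and send a basis of the complement $C$ to distinct primes so that the induced homomorphism $\mathbb{Z}^d\to\mathbb{Q}_{>0}^*$ has kernel exactly $L$. The only minor difference is in how $L$ is computed: the paper expresses each $a_i$ in coordinates with respect to a $\mathbb{Q}$-basis of the number field $\mathbb{Q}(a_1,\ldots,a_d)$ and reads off $L$ as the integer kernel of the resulting matrix---pure linear algebra over $\mathbb{Q}$---so your appeal to Masser's bound, while harmless, is heavier machinery than needed for \emph{additive} relations among algebraic numbers.
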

\begin{proof}
  Let $s$ be the dimension of the $\mathbb{Q}$-vector space spanned by
  $a_1,\ldots,a_d$.  By computing a basis over $\mathbb{Q}$ for the number
  field $\mathbb{Q}(a_1,\ldots,a_d)$ and the respective rational coordinates 
of $a_1,\ldots,a_d$ with respect to this basis, we obtain an $s\times d$ integer matrix
$A$ such that for every integer vector
$\boldsymbol{x}=(n_1,\ldots,n_d) \in \mathbb{Z}^d$ we have $n_1a_1+
\cdots +n_da_d = 0$ iff $A\boldsymbol{x}=0$.  

Now write $A=PBQ$, where
$B$ is an $s\times d$ matrix in Smith normal form and $P,Q$ are
unimodular square matrices.  Since $B$ has rank $s$ it has the form
$B=\begin{pmatrix}D & 0\end{pmatrix}$ for $D$ an $s\times s$ diagonal
matrix of full rank.

We define positive integers $\mu_1,\ldots,\mu_d$ as follows.
Choose $\mu_{1},\ldots,\mu_{s}$ to be the first $s$ prime numbers 
and let
$\mu_{s+1}=\ldots=\mu_d=1$.  Write $Q = (q_{ij})$ and define
$\lambda_i = \mu_1^{q_{1i}} \cdots \mu_d^{q_{di}}$ for $i\in
\{1,\ldots,d\}$.  

Then for all $n_1,\ldots,n_d \in \mathbb{Z}$ we have
\begin{eqnarray*}
\lambda_1^{n_1} \cdots \lambda_d^{n_d} =1& \Leftrightarrow &
\mu_1^{(Q\boldsymbol{x})_1} \cdots \mu_d^{(Q\boldsymbol{x})_d} = 1\\
& \Leftrightarrow & (Q\boldsymbol{x})_1=0,\ldots,(Q\boldsymbol{x})_s=0 \\
& \Leftrightarrow & BQ\boldsymbol{x} = 0 \quad\text{(since $B=\begin{pmatrix} D& 0\end{pmatrix}$)}\\
& \Leftrightarrow & PBQ\boldsymbol{x} = 0\quad\text{(since $P$ is invertible)}\\
& \Leftrightarrow & A\boldsymbol{x} = 0\\
& \Leftrightarrow & a_1n_1+\cdots+a_dn_d=0 \, .
\end{eqnarray*}
\end{proof}

\begin{corollary}\label{corl:synth}
  Let $D$ be a $d\times d$ diagonal matrix with algebraic entries.
  Then there exists a diagonal matrix $D'$, of the same dimension and
  with rational entries, such that $\Zcl{\Gen{e^D}} = \Zcl{\Gen{D'}}$.
\end{corollary}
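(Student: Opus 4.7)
My plan is to apply Proposition~\ref{prop:synth} directly to the diagonal entries of $D$. Write $D = \diag(a_1,\ldots,a_d)$ with each $a_i \in \Qbar$, and let $\lambda_1,\ldots,\lambda_d \in \Q$ be the rationals supplied by the proposition, so that for every $(n_1,\ldots,n_d) \in \Z^d$ we have $n_1 a_1 + \cdots + n_d a_d = 0$ iff $\lambda_1^{n_1}\cdots\lambda_d^{n_d} = 1$. Set $D' := \diag(\lambda_1,\ldots,\lambda_d)$. It remains to verify $\Zcl{\Gen{e^D}} = \Zcl{\Gen{D'}}$.

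Both sides are contained in the closed subvariety of diagonal matrices in $\GL_d(\Cx)$, so it suffices to compare the corresponding vanishing ideals in $\Cx[z_1,\ldots,z_d]$. A computation in the spirit of Lemma~\ref{lem:ideal_lin_diff_eq_diag} shows that, for any invertible diagonal $M = \diag(m_1,\ldots,m_d)$, the ideal of polynomials vanishing on the cyclic group $\Gen{M} = \{M^n : n \in \Z\}$ is generated by the binomials $z^\mu - z^{\mu'}$ with $\mu,\mu' \in \N^d$ and $m^\mu = m^{\mu'}$. Indeed, $p = \sum_\mu c_\mu z^\mu$ vanishes on every $M^n$ iff $\sum_\mu c_\mu (m^\mu)^n = 0$ for all $n$, and linear independence of distinct geometric sequences forces $\sum_{\mu : m^\mu = \gamma} c_\mu = 0$ for every value $\gamma$, which is exactly membership in the stated binomial ideal.

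Applying this to $M = e^D$, the defining condition $(e^{a})^\mu = (e^{a})^{\mu'}$ becomes $(\mu - \mu') \cdot a \in 2\pi i\,\Z$, whereas applying it to $M = D'$ yields $\lambda^{\mu-\mu'} = 1$, equivalently $(\mu - \mu') \cdot a = 0$ by Proposition~\ref{prop:synth}. To finish, I will invoke the Hermite-Lindemann theorem: for any nonzero algebraic $\beta$, $e^\beta$ is transcendental, so in particular $e^\beta \neq 1$. Since $(\mu - \mu') \cdot a$ is algebraic (an integer combination of the $a_i$), the condition $e^{(\mu-\mu')\cdot a} = 1$ forces $(\mu - \mu') \cdot a = 0$, and the two defining lattices of relations coincide. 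The only non-routine step is this appeal to Hermite-Lindemann, which is what lets us collapse the $2\pi i$ ambiguity inherent in passing from additive to multiplicative relations among algebraic exponents; everything else is bookkeeping with monomial ideals generated by sublattices of $\Z^d$.
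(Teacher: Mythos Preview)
Your proof is correct and follows essentially the same approach as the paper: apply Proposition~\ref{prop:synth}, set $D' = \diag(\lambda_1,\ldots,\lambda_d)$, and show that $\Zcl{\Gen{e^D}}$ and $\Zcl{\Gen{D'}}$ have the same binomial vanishing ideal. The only differences are cosmetic: where the paper cites Lemma~\ref{lem:ideal_lin_diff_eq_diag} and \cite[Lemma~6]{DerksenJK05} for the two ideals, you reprove the binomial-ideal description of $\Zcl{\Gen{M}}$ inline and make the appeal to Hermite--Lindemann (needed to collapse the $2\pi i\,\Z$ ambiguity when passing from $e^{(\mu-\mu')\cdot a}=1$ to $(\mu-\mu')\cdot a=0$) explicit, which the paper leaves implicit here and only spells out in Proposition~\ref{prop:disc}.
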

\begin{proof}
Write $D=\mathrm{diag}(a_1,\ldots,a_d)$ and let rational numbers
$\lambda_1,\ldots,\lambda_d$ be chosen as in Proposition~\ref{prop:synth}, i.e.,
such that $a_1n_1+\cdots+a_dn_d=0$ iff $\lambda_1^{n_1} \cdots
\lambda_d^{n_d}=1$ for all $n_1,\ldots,n_d \in \mathbb{Z}$.  
Define $D'=\mathrm{diag}(\lambda_1,\ldots,\lambda_d)$.
By Lemma~\ref{lem:ideal_lin_diff_eq_diag}, the ideal of the variety $\Zcl{\Gen{e^D}}$
is generated by $z^n-z^m$ such that $(n_1-m_1)a_1+\cdots+(n_d-m_d)a_d=0$.
On the other hand, it follows from~\cite[Lemma 6]{DerksenJK05} that
the ideal of the variety $\Zcl{\Gen{D'}}$ is generated
by $z^n-z^m$ such that $\lambda_1^{n_1-m_1}\cdots\lambda_d^{n_d-m_d}=1$.
But by construction the additive relations of $a$ are the same
as the multiplicative relations of $\lambda$, therefore the ideals are the same.
It follows $\Zcl{\Gen{e^D}} = \Zcl{\Gen{D'}}$.
\end{proof}

\begin{proposition}\label{prop:disc2}
Let $A\in \mathbb{Q}^{d\times d}$ be a rational matrix.
Then there exists an algebraic matrix $B$ such that
$\overline{\langle B \rangle} = \overline{\langle e^A \rangle} = \overline{\{e^{At}:t\in \mathbb{R} \}}$. 
\end{proposition}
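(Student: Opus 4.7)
The plan is to reduce $A$ to its Jordan form and build $B$ from the rational diagonal matrix furnished by Corollary~\ref{corl:synth} together with the unipotent matrix $e^N$. Write the Jordan decomposition $A = P(D+N)P^{-1}$ with $D$ diagonal, $N$ nilpotent, $DN=ND$, and $P,D,N$ all having algebraic entries. Conjugation by $P$ is a Zariski homeomorphism of $\mathrm{GL}_d(\mathbb{C})$, so it suffices to construct an algebraic $B'$ with $\overline{\langle B'\rangle} = \overline{\{e^{(D+N)t}:t\in\mathbb{R}\}}$ and then set $B := PB'P^{-1}$.

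Apply Corollary~\ref{corl:synth} to the diagonal entries $a_1,\ldots,a_d$ of $D$ to obtain a rational diagonal matrix $D' = \mathrm{diag}(\lambda_1,\ldots,\lambda_d)$ whose multiplicative relations over $\mathbb{Z}$ coincide with the additive relations of the $a_i$; in particular $a_i = a_j$ forces $\lambda_i = \lambda_j$. Combined with the fact that $DN=ND$ (which forces $N_{ij}\neq 0 \Rightarrow a_i = a_j$), this gives $D'N = ND'$. Set $B' := D' e^N$. Since $N$ is nilpotent and algebraic, $e^N = \sum_{j=0}^{d-1} N^j/j!$ is algebraic, so $B'$ is algebraic, and commutativity yields $(B')^k = (D')^k e^{Nk}$ for every $k \in \mathbb{Z}$.

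The heart of the argument, and the main obstacle, is to prove $\overline{\langle B'\rangle} = \overline{\mathcal{O}_{D+N}}$. Let $T := \overline{\mathcal{O}_D}$ be the algebraic torus of diagonal matrices associated with $D$, and $U := \overline{\mathcal{O}_N}$ the unipotent algebraic group associated with $N$ (equal to the image of the polynomial map $t \mapsto e^{Nt}$). By (the argument of) Proposition~\ref{prop:closure_lin_diff_eq}, $\overline{\mathcal{O}_{D+N}} = \overline{TU}$; and since $T \cap U = \{I\}$, multiplication gives a variety isomorphism $T \times U \cong TU$. So it is enough to show that the integer orbit $\{((D')^k, e^{Nk}) : k \in \mathbb{Z}\}$ is Zariski-dense in $T \times U$. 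For this, decompose an arbitrary $f \in \mathbb{C}[T \times U]$ as $f = \sum_\alpha \chi_\alpha \otimes p_\alpha$, where the $\chi_\alpha$ are distinct characters of $T$ and $p_\alpha \in \mathbb{C}[U]$; evaluation at the orbit gives $f((D')^k, e^{Nk}) = \sum_\alpha \mu_\alpha^k q_\alpha(k)$ with $\mu_\alpha := \chi_\alpha(D')$ and $q_\alpha(k) := p_\alpha(e^{Nk}) \in \mathbb{C}[k]$. By Proposition~\ref{prop:synth} the $\mu_\alpha$ are pairwise distinct positive rationals, hence $\{\mu_\alpha^k\}$ is linearly independent over $\mathbb{C}[k]$ when restricted to $\mathbb{Z}$; vanishing of $f$ on the orbit therefore forces each $q_\alpha \equiv 0$, and Zariski-density of $\{e^{Nk}:k\in\mathbb{Z}\}$ in $U$ then forces each $p_\alpha \equiv 0$. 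This proves $\overline{\langle B'\rangle} = \overline{TU}$, and the middle equality $\overline{\langle e^A\rangle} = \overline{\mathcal{O}_A}$ follows by the same density argument with $e^D$ in place of $D'$ (using Lindemann–Weierstrass to ensure distinct $\chi_\alpha$ take distinct values at $e^D$).
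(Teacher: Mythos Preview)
Your construction is exactly the one the paper uses: take the Jordan form $A=P(D+N)P^{-1}$, invoke Corollary~\ref{corl:synth} to get a rational diagonal $D'$ with $\Zcl{\langle D'\rangle}=\Zcl{\langle e^D\rangle}$, and set $B=P\,D'e^N\,P^{-1}$.  Where you differ is in the justification of the key identity $\Zcl{\langle D'e^N\rangle}=\Zcl{\langle e^De^N\rangle}$.  The paper disposes of this in a single displayed chain
\[
\Zcl{\langle e^De^N\rangle}
=\Zcl{\,\Zcl{\langle e^D\rangle}\cdot\Zcl{\langle e^N\rangle}\,}
=\Zcl{\,\Zcl{\langle D'\rangle}\cdot\Zcl{\langle e^N\rangle}\,}
=\Zcl{\langle D'e^N\rangle},
\]
which tacitly relies on the standard algebraic-group fact that for a commuting semisimple--unipotent factorisation $g=g_sg_u$ one has $g_s,g_u\in\Zcl{\langle g\rangle}$ (so that $\Zcl{\langle g\rangle}=\Zcl{\,\Zcl{\langle g_s\rangle}\cdot\Zcl{\langle g_u\rangle}\,}$).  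You instead prove the density of $\{((D')^k,e^{Nk}):k\in\mathbb Z\}$ in $T\times U$ directly, via the linear independence of the sequences $k\mapsto \mu_\alpha^k$ over $\mathbb C[k]$ for distinct $\mu_\alpha$; this is more elementary and self-contained, and along the way you make explicit the commutation $D'N=ND'$ (which the paper leaves implicit but which is needed for $D'e^N$ to be a Jordan decomposition at all).  Your derivation of the middle equality $\Zcl{\langle e^A\rangle}=\Zcl{\mathcal O_A}$ by the same method, with transcendence of $\pi$ replacing the integrality of the $\lambda_i$, mirrors the paper's separate Proposition~\ref{prop:disc}.  Both arguments are correct; the paper's is shorter but leans on background from linear algebraic groups, while yours trades that for an explicit exponential-polynomial computation.
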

\begin{proof}
  Let $P$ be an invertible matrix such that $A = P^{-1}(D+N)P$ with
  $D=\mathrm{diag}(a_1,\ldots,a_d)$ diagonal and $N$ a nilpotent
  Jordan matrix.  By Corollary~\ref{corl:synth} there exists a
  rational diagonal matrix $D'$ such that
  $\overline{\langle D'\rangle} = \overline{\langle e^D\rangle}$.  We
  now define $B := P^{-1}(D'e^{N})P$ where
  $D'=\mathrm{diag}(\lambda_1,\ldots,\lambda_d)$.  Note that $e^{N}$
  is a matrix of rational numbers.  Then we have:
  \[
    \overline{\langle e^A \rangle}
    = P^{-1} \overline{\langle e^{D}e^{N}\rangle } P
    = P^{-1} \overline{\overline{\langle e^{D} \rangle} \cdot
                \overline{\langle e^{N} \rangle}} P 
    = P^{-1} \overline{\overline{\langle D' \rangle} \cdot
                \overline{\langle e^{N} \rangle}} P
    = P^{-1} \overline{\langle D'e^{N} \rangle } P\\
    = \overline{\langle B \rangle} \, .
  \]
\end{proof}

\begin{proposition}
  Given a linear hybrid automaton $\mathcal{A}$, one can compute a
  finite affine program $\mathcal{A}'$ that the same algebraic
  invariants, i.e., $V(\mathcal{A})=V(\mathcal{A}')$.
\end{proposition}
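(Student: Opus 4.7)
The plan is to discretise each continuous location individually using Proposition~\ref{prop:disc2}, install the resulting matrices as self-loops, and show that the resulting finite affine program has the same algebraic invariants as the original hybrid automaton. Concretely, for every $q \in Q$ apply Proposition~\ref{prop:disc2} to $A_q$ to obtain an algebraic matrix $B_q$ such that $\Zcl{\Gen{B_q}} = \Zcl{\set{e^{A_q t}:t\in\R}}$. (The statement of Proposition~\ref{prop:disc2} is written for rational $A$, but its proof extends verbatim to algebraic $A$: the nilpotent part $N$ then has algebraic entries, and $e^N$ is algebraic since $N$ is nilpotent and so $e^N$ is a finite polynomial in $N$.) Define the finite affine program $\mathcal{A}' = (Q, A', E', T)$ by setting $A'_q = 0$ for every $q \in Q$ and $E' = E \cup \set{(q, B_q, q) : q \in Q}$.

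To prove $V(\mathcal{A}) = V(\mathcal{A}')$, I factor through the constructible affine program $\mathcal{A}^c$ furnished by Proposition~\ref{prop:lin_hybrid_to_constr_aff}, whose transitions are $E \cup \set{(q, X, q) : q \in Q,\; X \in \Zcl{\set{e^{A_q t}:t\in\Rp}}}$ and which satisfies $V(\mathcal{A}) = V(\mathcal{A}^c)$. It therefore suffices to prove $V(\mathcal{A}') = V(\mathcal{A}^c)$. The inclusion $V(\mathcal{A}') \subseteq V(\mathcal{A}^c)$ is immediate: since $B_q \in \Zcl{\Gen{B_q}} = \Zcl{\set{e^{A_q t}:t\in\Rp}}$, every transition of $\mathcal{A}'$ is also a transition of $\mathcal{A}^c$, so the collecting semantics of $\mathcal{A}'$ is contained in that of $\mathcal{A}^c$ and taking Zariski closures gives the inclusion.

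For the reverse inclusion, I show that $V(\mathcal{A}')$ is an inductive algebraic invariant of $\mathcal{A}^c$, from which $V(\mathcal{A}^c) \subseteq V(\mathcal{A}')$ follows by minimality. The containment $T_q \subseteq V(\mathcal{A}')_q$ and the inductiveness along edges of $E$ are inherited from $V(\mathcal{A}')$ being an invariant of $\mathcal{A}'$; the only non-trivial requirement is that for each $q \in Q$ and every $M \in \Zcl{\set{e^{A_q t}:t\in\Rp}}$ one has $M \cdot V(\mathcal{A}')_q \subseteq V(\mathcal{A}')_q$. For this I invoke the following general fact: for any Zariski-closed $V \subseteq \Cx^d$, the preserver set $\mathrm{Pres}(V) := \set{M \in \Cx^{d\times d} : MV \subseteq V}$ is a Zariski-closed submonoid of $\Cx^{d\times d}$. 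Closure under matrix multiplication is obvious; Zariski-closedness follows from the identity
\[
  \mathrm{Pres}(V) \;=\; \bigcap_{p\in \mathbf{I}(V)} \bigcap_{x \in V} \set{M \in \Cx^{d\times d} : p(Mx) = 0},
\]
each factor being the zero set of a polynomial in the entries of $M$. Since $V(\mathcal{A}')_q$ is preserved by $B_q$, the semigroup $\Gen{B_q}$ is contained in $\mathrm{Pres}(V(\mathcal{A}')_q)$, and taking Zariski closures yields $\Zcl{\set{e^{A_q t}:t\in\Rp}} = \Zcl{\Gen{B_q}} \subseteq \mathrm{Pres}(V(\mathcal{A}')_q)$, which is precisely the required inductiveness.

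The principal obstacle is the verification that the preserver set is both Zariski-closed and closed under composition; once that technical lemma is in hand, the argument is a short invariant-transfer of the same flavour as Proposition~\ref{prop:lin_hybrid_to_constr_aff}. A secondary, mostly cosmetic, point is that Proposition~\ref{prop:disc2} must be invoked for algebraic rather than rational input, which is justified as noted in the first paragraph.
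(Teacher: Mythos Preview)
Your construction is identical to the paper's: set $A'_q=0$ and add a single self-loop $(q,B_q,q)$ per location, with $B_q$ furnished by Proposition~\ref{prop:disc2}. The paper then simply asserts that the equality $V(\mathcal{A})=V(\mathcal{A}')$ follows by the same argument as in Proposition~\ref{prop:lin_hybrid_to_constr_aff}, without spelling it out. Your proof is correct and supplies that missing argument, but organises it a little differently: instead of directly comparing the one-step operators $\Phi_{\mathcal{A}}$ and $\Phi_{\mathcal{A}'}$ and invoking Lemma~\ref{lem:least_closed_fixpoint}, you factor through the constructible program $\mathcal{A}^c$ and show $V(\mathcal{A}')=V(\mathcal{A}^c)$ via the preserver-monoid lemma. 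This lemma is exactly the piece one needs to make the paper's ``analogous'' step rigorous here, since a single application of $\Phi_{\mathcal{A}'}$ only produces $B_qX_q$ rather than $\Zcl{\Gen{B_q}}\cdot X_q$; the closure under all of $\Zcl{\Gen{B_q}}$ only appears once one passes to the fixed point and uses that $\mathrm{Pres}(V)$ is Zariski-closed. So your route is essentially the same in spirit but is more explicit about the one non-trivial point, and your observation about extending Proposition~\ref{prop:disc2} to algebraic input is a legitimate clarification that the paper glosses over.
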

\begin{proof}
Suppose that $\mathcal{A}=(Q,A,E,T)$. We define
$\mathcal{A}'=(Q,A',E',T)$, where
$A'_q=0$ for all $q\in Q$ and
\[ E'=E\cup \{ (q,B_q,q) : q \in Q\} \, ,\]
with $B_q$ is an algebraic matrix such that
$\Zcl{\Gen{e^{A_q}}} = \Zcl{\Gen{B_q}}$ for all $q \in Q$.
The existence of the matrices $B_q$ is guaranteed by
Proposition~\ref{prop:disc2}.  In other words, we 
obtain $\mathcal{A}'$ from $\mathcal{A}$ by setting the derivative of all variables to $0$
in every location and by adding a compensatory selfloop edge to every location.

The reasoning that $V(\mathcal{A})=V(\mathcal{A}')$ is entirely
analogous to that in the proof of
Proposition~\ref{prop:lin_hybrid_to_constr_aff}.
\end{proof}


\section{Switching Systems with Guards}
\label{sec:guards}
In this section we consider linear hybrid automata with no discrete
updates on the variables but with equality guards on the discrete mode
changes.  For such systems there is a smallest algebraic inductive
invariant, which can be obtained as the (location-wise) intersection
of the family of all algebraic inductive invariants.  However, as we
show in this section, this invariant is no longer computable.  In
other words, in the presence of equality guards the analog of
Theorem~\ref{thm:main} fails.  (As an aside we remark that the
discrete mode changes no longer induce Zariski continuous maps on
configurations if there are equality guards.  Hence we cannot
necessarily recover the smallest algebraic inductive invariant as the
Zariski closure of the collecting semantics).

\begin{theorem}\label{th:invariant_with_guards_undec}
  There is no algorithm that computes the strongest algebraic
  inductive invariant for the class of switching systems with equality
  guards.
\end{theorem}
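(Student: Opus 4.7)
The plan is to reduce from the undecidable problem of whether the set of reachable configurations of a Minsky (two-counter) machine $\mathcal{C}$ is finite. Given $\mathcal{C}$, I would construct a switching system $\mathcal{A}$ with equality guards containing a distinguished \emph{snapshot} location $q^{*}$ whose continuous dynamics are frozen (i.e., $A_{q^{*}} = 0$), so that the configurations visible at $q^{*}$ are in bijection, up to a fixed algebraic scaling, with the reachable configurations of $\mathcal{C}$.

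The simulation uses an ambient space of a few real variables: two counter variables $c_1, c_2$ (encoding a Minsky value $n$ as $c_i = n\Delta$ for a fixed algebraic period $\Delta$), and a rotating clock pair $(u, v)$ satisfying $u' = \omega v$, $v' = -\omega u$ with $\omega = 2\pi/\Delta$, returning to its initial algebraic value after every $\Delta$ time units. Each Minsky instruction is implemented by a location of $\mathcal{A}$ in which $c_i' = \pm 1$ or $c_i' = 0$ as appropriate; the outgoing transition carries an equality guard on $(u, v)$ forcing the instruction to take exactly one clock period, and, for zero-tests, an additional guard $c_i = 0$ on the zero branch. The initial value of $(u, v)$ is chosen distinct from its tick value, so that no transition fires spuriously at time $0$. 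After each simulated step, control passes momentarily through $q^{*}$, whose frozen dynamics ensure that its reachable set is precisely the set of tick-time configurations.

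A routine induction on the number of simulation steps then shows that the reachable set of $\mathcal{A}$ at $q^{*}$ is the scaled image of $\mathcal{C}$'s reachable configuration set. In the finite case this set is itself Zariski-closed (being a finite set of isolated points), and, by analysing each transition guard, one checks that its location-wise Zariski closure is actually inductive and hence equals the strongest algebraic invariant; in particular this invariant has dimension $0$ at $q^{*}$. Conversely, if $\mathcal{C}$'s reachable set is infinite then the snapshot location has infinitely many reachable isolated points, so any invariant containing them must have positive dimension. Since the dimension of an algebraic variety is effectively computable from any finite set of generators of its ideal (via Gr\"obner-basis techniques), an algorithm computing the strongest algebraic invariant of $\mathcal{A}$ would decide the finiteness problem for Minsky machines, a contradiction.

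The principal obstacle is establishing faithfulness of the simulation. Because switching systems forbid variable resets and admit non-deterministic transitions at every instant where a guard is satisfied, every piece of state must be carried by continuous flow alone, and the guards on $(u, v)$ and on $c_i$ must be chosen so that at every tick the unique unblocked transition is the one corresponding to the intended Minsky step. Verifying that wrong choices (e.g., taking the zero-branch when the counter is non-zero) lead to dead-ends at the next tick, and that those dead-ends do not contribute spurious reachable configurations at $q^{*}$ that would break the dimension dichotomy, is the core technical content of the proof.
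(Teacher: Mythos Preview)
Your overall strategy---reduce from the boundedness problem for two-counter machines and read off the answer from the dimension of the strongest invariant---is exactly the paper's. The gap is in your clock mechanism, and it is not merely cosmetic.

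A minor point first: with $\Delta$ algebraic you have $\omega = 2\pi/\Delta$ transcendental, so the rotation matrix $\left(\begin{smallmatrix}0&\omega\\-\omega&0\end{smallmatrix}\right)$ does not have algebraic entries and your automaton falls outside the class. This alone is repairable (take $\omega$ algebraic and let the period be transcendental; it never appears in a matrix or guard). The substantive problem is your inductiveness claim in the finite case. At an increment location with $c_1'=1$, $c_2'=0$ and $(u,v)$ rotating, a single forward orbit is the helix $\{(c_1^{0}+t,\,c_2^{0},\,\cos(\omega t+\varphi),\,\sin(\omega t+\varphi)):t\ge 0\}$. Since $t\mapsto t$ and $t\mapsto e^{i\omega t}$ are algebraically independent, the only polynomial relations on this curve are $u^2+v^2=1$ and $c_2=c_2^{0}$; its Zariski closure is the two-dimensional cylinder $\{(c_1,c_2^{0},u,v):u^2+v^2=1\}$. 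Any algebraic invariant at that location must contain this cylinder. Intersecting with your tick guard $(u,v)=(u_1,v_1)$ yields the entire line $\{(c_1,c_2^{0},u_1,v_1):c_1\in\mathbb{R}\}$, all of which the invariant at $q^{*}$ must then contain. Hence even when the Minsky machine has only finitely many reachable configurations, the strongest algebraic inductive invariant at $q^{*}$ has dimension at least $1$, and your dimension-$0$ versus positive-dimension dichotomy collapses. Iterating through further instruction locations pushes the dimension higher still, so shifting the threshold does not rescue the argument.

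The paper avoids this by using a purely \emph{linear} timer: a variable $t$ with slope $+1$ in instruction locations and $-1$ in auxiliary locations, with guards $t=1$ and $t=0$. Every forward orbit is then a half-line; its Zariski closure is the full line, and the points added by closure are exactly the backward-time predecessors of the orbit. Flowing forward from any such predecessor lands back on the original half-line, so no new successor configurations are produced under any guarded transition. This is precisely what makes the location-wise Zariski closure inductive in the finite case, and it gives a dimension-$1$ versus dimension-$>1$ dichotomy rather than $0$ versus $>0$.
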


\begin{proof}[Proof Sketch (see full proof in appendix)] The idea is to simulate
a 2-counter machine in such a way that if the machine has an infinite
run then the strongest invariant has dimension $2$, and otherwise it
has dimension $1$. Since the dimension of an algebraic set can be effectively
determined; this concludes the sketch.
\end{proof}

\bibliographystyle{plainurl}
\bibliography{literature}

\appendix

\appendix
\section{Time Discretisation}
\label{sec:time-disc}






\begin{proposition}\label{prop:disc}
For a rational matrix $A \in \mathbb{Q}^{d \times d}$ we have
\[ \overline{\{ e^{At} : t \in \mathbb{R} \}} =
   \overline{\langle e^A \rangle } \, . \] 
\end{proposition}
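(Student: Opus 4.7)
\textbf{Proof plan for Proposition~\ref{prop:disc}.}

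The inclusion $\overline{\langle e^A\rangle}\subseteq\overline{\{e^{At}:t\in\R\}}$ is immediate: taking $t=1$ shows $e^A\in\{e^{At}:t\in\R\}$, hence $\langle e^A\rangle=\{e^{Am}:m\in\Z\}\subseteq\{e^{At}:t\in\R\}$, and Zariski closure is monotone. The content of the proposition is the reverse inclusion, and the plan is to prove it by showing that any polynomial $p\in\Cx[X_{11},\ldots,X_{dd}]$ vanishing on $\langle e^A\rangle$ also vanishes on the whole orbit $\{e^{At}:t\in\R\}$; this immediately gives the desired containment of ideals, and hence of Zariski closures.

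Fix such a $p$ and consider $g(t):=p(e^{At})$. Using the Jordan decomposition $A=P(D+N)P^{-1}$ with $D=\diag(\lambda_1,\ldots,\lambda_d)$ and $N$ nilpotent commuting with $D$, each entry of $e^{At}=Pe^{Dt}e^{Nt}P^{-1}$ is visibly of the form $\sum_k q_k(t)e^{\lambda_k t}$ with $q_k\in\Cx[t]$. Taking products and sums of such entries, $g$ is again an exponential polynomial, which after collecting like terms becomes
\[
g(t) \;=\; \sum_{i=1}^{N} Q_i(t)\,e^{\nu_i t},
\]
with pairwise distinct $\nu_i\in\Cx$ and $Q_i\in\Cx[t]$. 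Crucially, each $\nu_i$ is an $\N$-linear combination of the $\lambda_k$; since $A$ is rational, the $\lambda_k$ are algebraic, and hence so are the $\nu_i$. By assumption $g(m)=0$ for every $m\in\Z$, and the task is to conclude $g\equiv 0$.

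Writing $r_i:=e^{\nu_i}$, the integer values satisfy $\sum_i Q_i(m)\,r_i^m=0$ for all $m\in\Z$, which is a generalised linear recurrence relation. I first argue that the $r_i$ are pairwise distinct: if $r_i=r_j$ for some $i\ne j$, then $\nu_i-\nu_j$ would be a nonzero element of $2\pi i\Z$, i.e.\ $\nu_i-\nu_j=2\pi i k$ for some $k\in\Z\setminus\{0\}$; but $\nu_i-\nu_j$ is algebraic while $2\pi i k$ is transcendental, giving a contradiction. Once distinctness of the $r_i$ is known, the functions $m\mapsto m^j r_i^m$ (ranging over all pairs $(i,j)$) are linearly independent on $\Z$ by a standard Vandermonde/linear-recurrence argument, so every $Q_i$ must vanish identically. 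Hence $g\equiv 0$, and in particular $p(e^{At})=0$ for all $t\in\R$, which is what we needed.

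The only substantive step is the transcendence argument that promotes distinctness of the exponents $\nu_i$ to distinctness of the bases $r_i$; this is the single place where algebraicity of the eigenvalues of $A$ (guaranteed by $A\in\Q^{d\times d}$) is used. Everything else is the Jordan-form unwinding of $e^{At}$ as an exponential polynomial and the classical linear independence of such exponential-polynomial functions restricted to $\Z$.
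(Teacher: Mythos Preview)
Your argument is correct, and the single substantive step---that distinct algebraic exponents $\nu_i$ force distinct bases $r_i=e^{\nu_i}$ via the transcendence of $2\pi i$---is exactly the same lever the paper pulls. The approaches differ in how the proof is organised around that lever.

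The paper proceeds structurally and by cases. It first reduces to the diagonal case $A=U^{-1}DU$ and appeals to two external descriptions of the relevant ideals: Lemma~\ref{lem:ideal_lin_diff_eq_diag} says the ideal of $\Zcl{\{e^{Dt}:t\in\R\}}$ is generated by binomials $z^a-z^b$ coming from \emph{additive} relations among the eigenvalues, while \cite[Lemma~6]{DerksenJK05} says the ideal of $\Zcl{\Gen{e^D}}$ is generated by binomials coming from \emph{multiplicative} relations among the $e^{a_i}$. The transcendence step then shows these two sets of binomials coincide. The nilpotent case is dispatched by a separate citation to Derksen--Jeandel--Koiran, and the general case is obtained by combining the two via the Jordan decomposition as in Proposition~\ref{prop:closure_lin_diff_eq}.

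Your route is a direct, unified argument that avoids both the case split and the external citations: you work with an arbitrary polynomial $p$ vanishing on $\Gen{e^A}$, expand $g(t)=p(e^{At})$ once and for all as a finite sum $\sum_i Q_i(t)e^{\nu_i t}$ using the Jordan form (the polynomial factors $Q_i$ absorb the nilpotent part automatically), apply the transcendence step to get distinctness of the $r_i=e^{\nu_i}$, and then invoke the linear independence of the sequences $m\mapsto m^j r_i^m$ over $\Z$ to conclude $g\equiv 0$. This is more elementary and self-contained; it does not need to know in advance that the ideals in question are binomial, nor does it need a separate treatment of the nilpotent part. What the paper's approach buys in exchange is structural information: it makes visible that the two Zariski closures coincide precisely because the additive relations among the eigenvalues of $A$ and the multiplicative relations among the eigenvalues of $e^A$ are identical, which is the phenomenon later exploited in Section~\ref{sec:fin-discretisation}.
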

\begin{proof}
Suppose that $A$ is diagonalisable---say $A=U^{-1}DU$ for some invertible matrix $U$ and $D=\mathrm{diag}(a_1,\ldots,a_d)$.
It suffices to prove that 
 $\overline{\{ e^{Dt} : t \in \mathbb{R} \}} =
   \overline{\langle e^D \rangle }$.

  Consider a
  multiplicative relationship among the eigenvalues of $e^D$---say
  $(e^{a_1})^{n_1} \cdots (e^{a_d})^{n_d}=1$, where
  $n_1,\ldots,n_d \in \mathbb{Z}$.
Then $a_1n_1+\cdots + a_dn_d \in (2\pi i)\mathbb{Z}$.
But since $a_1,\ldots,a_d$ are algebraic numbers, we must in fact have
$a_dn_d+\cdots + a_dn_d=0$.  
It follows that $a_1tn_1+\cdots +a_dtn_d =0$ for all $t\in\mathbb{R}$
and hence $(e^{a_1t})^{n_1} \cdots (e^{a_dt})^{n_d}=1$ for all
$t\in \mathbb{R}$, i.e., the same multiplicative relation also holds among the eigenvalues of $e^{Dt}$.

Since the ideal of all polynomial relations satisfied by
$\langle e^{D} \rangle$ is generated by the multiplicative relations
satisfied by the eigenvalues of $e^D$, we have that for any
$t\in \mathbb{R}$, matrix $e^{Dt}$ satisfies all polynomial relations
satisfied by $\langle e^D \rangle$.  This proves the proposition in
case $A$ is diagonalisable.

Next, suppose that $A$ is nilpotent.  The fact that
$\overline{\{e^{At}:t\in\mathbb{R} \}} = \overline{\langle e^{A} \rangle}$
is already shown in Section 3.3 of Derksen, Jeandel, and Koiran.  

The general case can by handled by reduction to the
diagonalisable and nilpotent cases as in Proposition~\ref{prop:closure_lin_diff_eq}.
\end{proof}

Proposition~\ref{prop:disc} crucially relies on the fact that $\pi$
does not appear in the description of $A$.  Indeed, consider the case
that $A=\begin{pmatrix}2\pi i\end{pmatrix}\in\Cx^{1\times1}$. Then
$\set{e^{At}:t\in\R}=\set{z\in\Cx:|z|=1}$ is the unit circle. However
$\set{e^{An}:n\in\Z}=\set{1}$ is a singleton.  Such an example is
possible because the map $z\in\Cx\mapsto e^{Az}$ is not
Zariski-continuous in general.

\section{Proof of Theorem~\ref{th:invariant_with_guards_undec}}

Recall that a non-deterministic 2-counter machine $M$ consists of two
counters $C$ and $D$ and a list of $n$ instructions.  Each instruction
increments one of the counters, decrements one of the counters, or
tests one of the counters for zero.  After executing a counter update
or a successful test, the machine proceeds nondeterministically to one
of two specified instructions.  The machine halts if it executes a
test instruction whose condition is false.  Given an instruction $i$,
if $j$ is one of the two possible successors of $i$ then we call the
pair $(i,j)$ a \emph{transition} of $M$.  Initially $M$ starts with both
counters zero and instruction $1$ is the first to be executed.  A
configuration of $M$ is a triple consisting of the current instruction
and the current counter values.  The problem of whether such a machine
$M$ can reach infinitely many configurations from its initial
configuration is undecidable.

Corresponding to such a 2-counter machine $M$ we define a linear
hybrid automaton $\mathcal{A}=(Q,A,E,q)$ in dimension $3$.  We think
of $\mathcal{A}$ as having continuous variables $c,d,t$, where $c$ and
$d$ respectively correspond to the counters of $M$.  Each variable has
constant derivative in each location, which is zero unless otherwise
specified.  For each instruction $i$ of $M$ we postulate a location
$q_i$ of $\mathcal{A}$ and for each transition $(i,j)$ of $M$ we
postulate a location $q_{i,j}$ of $\mathcal{A}$.  
Variable $t$ has slope $1$ in each location $q_i$ and slope $-1$ in
each location $q_{i,j}$.  For every transition$(i,j)$ of $M$,
automaton $\mathcal{A}$ has an edge from $q_i$ to $q_{i,j}$ with guard
$t=1$ and an edge from $q_{i,j}$ to $q_j$ with guard $t=0$.
Intuitively, if an execution of $\mathcal{A}$ correctly simulates a
run of $M$ then $\mathcal{A}$ spends one time unit in each location,
alternating between locations $q_i$ that correspond to instructions of
$M$ and locations $q_{i,j}$ that correspond to transitions of $M$.

Suppose that the $i$-th instruction of $M$ performs an incrementation
$C:=C+1$.  Then variable $C$ has slope $1$ in location $q_i$.
Likewise if the $i$-th instruction if $C:=C-1$, then variable $c$ has
slope $-1$ in location $q_i$.  If the $i$-th instruction of $M$ is the
zero test $C=0$, then the edge from location $q_i$ to $q_{i,j}$ in $A$
has guard $c=0$.  There are corresponding constructions for counter
operations and tests on counter $D$.

This completes the description of $\mathcal{A}$.
We now claim that:
\begin{enumerate}
\item If $M$ can only reach finitely many configurations from the
  initial configuration then $V(\mathcal{A})=\{ V_q : q \in Q\}$, the
  Zariski closure of the collecting semantics, is an
  inductive invariant that has dimension one.
\item If infinitely many configuration are reachable from the initial configuration of $M$
then the smallest inductive invariant has dimension strictly greater than one.
\end{enumerate}

To prove the claim, note that for each reachable configuration
$(i,z_1,z_2)$ of $M$, the collecting semantics
$\Phi(\mathcal{A})_{q_i}$ contains a half-line $L$ containing the
point $(z_1,z_2,0)$, whose direction is determined by the slopes of
the respective variables of $\mathcal{A}$ in location $q_i$.  The
Zariski closure of $L$ is the affine hull of $L$, i.e., the
corresponding full line containing $L$.  Crucially, the points added
to $L$ to obtain the full line are all predecessors of $L$ under the
flow relation of $\mathcal{A}$.  In particular the Zariski closure is
inductive: it remains closed under the transition relation of
$\mathcal{A}$.  In particular, if $M$ can only reach finitely many
configurations, then the Zariski closure of the collecting semantics
consists of finitely many lines in each location (and so has dimension
one) and is moreover an inductive invariant.

Suppose $M$ can reach infinitely many configurations.  Since any
algebraic inductive invariant must in particular contain the Zariski
closure of the collecting semantics, it follow that any algebraic
inductive invariant must contain infinitely many lines in some
location and thus must have dimension strictly greater than one.

Since the dimension of an algebraic set can be effectively determined,
we conclude that it is not possible to compute the smallest algebraic
invariant of a linear hybrid automaton with equality guards (even with
a no discrete updates of the variables).

\end{document}